\pdfoutput=1
\documentclass[10pt, conference, letterpaper]{IEEEtran}
\IEEEoverridecommandlockouts
\usepackage{cite}
\usepackage{amsmath,amssymb,amsfonts}
\usepackage{algorithmic}
\usepackage{graphicx}
\usepackage{textcomp}
\usepackage{xcolor}
\usepackage{color}
\usepackage{caption}

\usepackage{booktabs}
\usepackage{bm}

\usepackage{graphicx}
\graphicspath{ {Images/} }  

\usepackage[caption=false,font=footnotesize]{subfig}

\usepackage{color}
\usepackage{subfig}

\usepackage{algorithm}
\usepackage{algorithmic}   
\usepackage{comment}

\newtheorem{lemma}{\textbf{Lemma}}[section]
\newtheorem{theorem}{\textbf{Theorem}}[section]

\newtheorem{proposition}{\textbf{Proposition}}[section]

\def\BibTeX{{\rm B\kern-.05em{\sc i\kern-.025em b}\kern-.08em
    T\kern-.1667em\lower.7ex\hbox{E}\kern-.125emX}}
\begin{document}

\title{LYRA: Label-Free Structural Synchronization and Resource Allocation for UAV Edge Networks
}

\author{
    \IEEEauthorblockN{
        Feng He\IEEEauthorrefmark{1}\IEEEauthorrefmark{2},
        Alireza Furutanpey\IEEEauthorrefmark{2},
        Paolo Bellavista\IEEEauthorrefmark{1}, 
        Yu Qiu\IEEEauthorrefmark{3},
        Jiangchuan Liu\IEEEauthorrefmark{4}, 
        Jiannong Cao\IEEEauthorrefmark{5},
        Schahram Dustdar\IEEEauthorrefmark{2}
    }
    
    \vspace{1.5mm} 
    
    \IEEEauthorblockA{
        \IEEEauthorrefmark{1}University of Bologna,
        \IEEEauthorrefmark{2}TU Wien, 
        \IEEEauthorrefmark{3}City University of Hong Kong, \\
        \IEEEauthorrefmark{4}Simon Fraser University, 
        \IEEEauthorrefmark{5}Hong Kong Polytechnic University, \\
        Email: feng.he2@unibo.it, a.furutanpey@dsg.tuwien.ac.at, paolo.bellavista@unibo.it, csqiuyu@mail.scut.edu.cn, \\ jcliu@sfu.ca, jiannong.cao@polyu.edu.hk, dustdar@dsg.tuwien.ac.at
    }
}

\maketitle

\begin{abstract}
While deploying hierarchical vision models to process mission-critical tasks, UAV edge systems must adaptively update the models to sustain inference reliability under low-level environmental corruption. However, existing work has overlooked the optimal timing for model updates, the impracticality of relying on real-time expert labels, and the significant bandwidth and energy constraints of UAVs. This paper proposes a joint model update scheduling and resource allocation framework, aiming to maximize long-term semantic fidelity and resource efficiency of UAV edge intelligence systems. To address the challenge of label-free semantic evaluation, we formulate the Online Semantic Disagreement Rate (OSDR) as a proxy for timely update triggering, thereby enabling fine-grained Sensitivity-Aware Structural Synchronization (SASS). Furthermore, to overcome the curse of dimensionality in hybrid action spaces and effectively bound long-term energy budgets, we propose a Lyapunov-guided discrete reinforcement learning algorithm that performs action space dimensionality reduction and transforms constraints into virtual queue stability problems. The reported experimental results, based on real traffic traces, demonstrate that the proposed framework consistently outperforms representative baselines in semantic recovery efficiency and update triggering precision, by satisfying long-term energy budget and by reducing average risk backlog by up to 33.3\% in the dynamic environmental corruption scenario.
\end{abstract}

\begin{IEEEkeywords}
UAV edge intelligence, model update scheduling, resource allocation, Lyapunov optimization, deep reinforcement learning.
\end{IEEEkeywords}


\section{Introduction}
MEC-enabled unmanned aerial vehicle (UAV) systems have emerged as a pivotal paradigm for mission-critical visual tasks, ranging from autonomous infrastructure inspection to real-time disaster response. To maintain high-level perception, these UAV edge nodes typically deploy sophisticated vision models characterized by hierarchical architectures \cite{r1,r2,r3}. However, the operational reliability of these systems is frequently compromised by low-level environmental corruptions, such as fog or abrupt illumination variations, encountered during continuous flight. These environmental stressors primarily corrupt low-level physical pixels, which, due to the feed-forward nature of hierarchical deep neural networks (DNNs), can trigger a significant degradation of semantic fidelity. Consequently, ensuring the continuous reliability of edge intelligence against such unpredictable stressors, while effectively bounding the system operational energy and bandwidth budgets, constitutes a primary challenge for resilient autonomous inspection. 

Despite the critical need for timely model maintenance, existing paradigms struggle to reconcile semantic reliability with resource efficiency under dynamic stressors. On the one hand, current frameworks predominantly rely on communication-intensive and full model synchronization \cite{r4,r5,r6,r7,r8,r9,r10,r11,r12,r13,r14,r15,r16,r17,r18,r19}, which incurs substantial overhead on the limited bandwidth of UAV-to-ground links. Moreover, these conventional trigger mechanisms are ill-suited for tracking inference degradation caused by dynamic environmental stressors. 
In fact, existing trigger mechanisms, whether time-driven (e.g., AoI, periodic schedules) or state-driven (e.g., resource utilization, data accumulation) \cite{r4,r5,r8,r9,r11,r12,r13,r14,r15,r16,r17,r19,r20,r21,r22,r23,r24,r25,r26}, are content-agnostic, relying solely on elapsed time or system states without perceiving the non-linear semantic degradation caused by abrupt visual corruptions.
On the other hand, recent semantic-aware scheduling approaches attempt to incorporate task-level feedback, but typically assume the real-time availability of ground-truth labels to compute inference accuracy or loss as feedback \cite{r6,r7,r10,r18}. In autonomous inspection flights, streaming visual data cannot be annotated by ground experts in real time, making label-dependent feedback physically unrealizable mid-flight, thus generating a relevant operational issue: the UAV cannot afford bandwidth-heavy full updates over air-to-ground links, yet lacks a label-free proxy to evaluate semantic degradation.

To address these open challenges, we propose LYRA, which, to the best of our knowledge, is the first joint update scheduling and resource allocation framework tailored for UAV edge inference. LYRA specifically optimizes synchronization efficiency under low-level environmental corruptions to maintain the semantic fidelity and resource efficiency of hierarchical vision models. To achieve these conflicting objectives, we introduce three interconnected innovations. First, we develop SASS, transitioning from communication-intensive full-model synchronization to fine-grained and partial updates. By adaptively regulating synchronization depth via dynamic sensitivity thresholds, SASS achieves bandwidth-efficient maintenance tailored to DNN hierarchical vulnerability. Second, to circumvent the unavailability of expert feedback, we formulate OSDR as a label-free proxy to quantify real-time semantic fidelity. Crucially, simply combining these mechanisms introduces non-trivial optimization challenges: dynamically coupling proxy-driven update triggers with SASS structural thresholds and continuous resources inherently creates an intractable hybrid continuous-discrete optimization problem. To resolve this integration complexity, we develop a Lyapunov-guided DRL that mathematically unifies these components, by performing action space dimensionality reduction and by decoupling long-term physical constraints to facilitate efficient exploration within the complex action space. Therefore, the main original contributions of the paper can be summarized as follows:

$\bullet$ \textbf{Novel Hierarchical Synchronization Mechanism}: We introduce SASS, which replaces conventional and communication-intensive full-model updates. By leveraging dynamic sensitivity thresholds to regulate synchronization depth, SASS executes fine-grained partial structural updates, by significantly reducing bandwidth consumption while accelerating semantic recovery.

$\bullet$ \textbf{Label-Free Semantic Performance Proxy}: We formulate OSDR to dynamically evaluate the Value of Update (VoU). This metric  circumvents the reliance on real-time ground-truth labels, by providing a physically realizable label-free proxy to quantify inference degradation under dynamic environmental corruptions.

$\bullet$ \textbf{Joint Optimization and Algorithm Design}: We formulate the joint model update scheduling and resource allocation problem as a highly non-convex Mixed-Integer Non-Linear Programming (MINLP) model. To solve this, we propose the LYRA framework, which originally integrates Lyapunov optimization with DRL. This approach decouples long-term hard physical constraints and performs action space dimensionality reduction, thereby mitigating the curse of dimensionality and satisfying long-term energy and resource compliance.

$\bullet$ \textbf{Trace-driven system evaluation}: Extensive experiments have been conducted to validate LYRA effectiveness. The reported quantitative results demonstrate that LYRA reduces average semantic risk backlog by 33.3\% over state-of-the-art baselines. Furthermore, SASS reduces average communication overhead by 80.5\% compared with conventional back-to-front updates, while achieving superior semantic recovery efficiency.

\section{Related Work}

\textbf{Edge AI Models and Continuous Maintenance.} Recent advancements in MEC have catalyzed the development of continuous edge learning and cloud-edge collaborative model maintenance to support dynamic inference tasks. However, predominant model update mechanisms typically rely on either full-model fine-tuning or back-to-front partial updates, i.e., retraining only deep semantic layers or classifier heads \cite{r27,r28}. These conventional paradigms are tailored for addressing label shifts or task variations. When subjected to low-level environmental corruptions, such as dynamic fog or illumination drops commonly encountered by UAVs, these strategies exhibit significant inefficiency. Full-model synchronization is impractical over bandwidth-constrained air-to-ground links. Furthermore, back-to-front updates are inherently information-limited when faced with significant low-level corruptions, as deep-layer refinements cannot recover information discarded by degraded shallow representations. This motivates SASS, which prioritizes front-to-back synchronization to better align with asymmetric sensitivity of hierarchical neural blocks.

\textbf{Model Update Triggers and Scheduling Metrics.} To optimize resource allocation in MEC networks, conventional update triggers generally fall into three categories: time-driven (e.g., periodic triggers \cite{r4,r5,r20,r24,r25} or AoI \cite{r8,r11,r14,r20,r21,r22,r23}) to minimize model staleness, internal system state-driven heuristics (e.g., resource utilization \cite{r12,r15,r16,r17,r19}, updated DT data accumulation \cite{r9,r13,r14,r15,r17,r19,r26}) to control system overhead, or model-driven metrics (e.g., loss or accuracy \cite{r6,r7,r10,r18}) to maximize model fidelity. Concurrently, emerging semantic-aware scheduling frameworks \cite{r28,r29,r30,r31,r32,r33} attempt to prioritize transmissions based on downstream task utility rather than mere bit-level metrics. Nevertheless, both state- and time-driven metrics are content-agnostic; they track elapsed time or system states, failing to capture the non-linear inference degradation caused by these environmental corruptions. Furthermore, SOTA semantic scheduling mechanisms often operate under the assumption of the real-time availability of ground-truth labels at the edge node to compute inference accuracy or loss as feedback. In realistic autonomous UAV operations, real-time visual streams are unlabelled, rendering such label-dependent feedback loops impractical to deploy online. To solve this challenge, we formulate the OSDR. By addressing the limitations of state/time-driven mechanisms and the label-dependence of model-driven approaches, this proxy drives the VoU evaluation.

\textbf{DRL for Joint Scheduling and Model Updates.} DRL have been extensively adopted to tackle dynamic resource allocation and trajectory planning in UAV-assisted MEC networks \cite{r34,r35,r36,r37,r38,r39,r40,r41,r42}. However, directly applying conventional DRL to our joint update scheduling and resource allocation problem encounters specific optimization challenges. The formulation is a highly non-convex MINLP problem featuring a hybrid action space: discrete update decisions coupled with continuous layer sensitive threshold, bandwidth and power allocations. To handle such hybrid spaces, existing DRL frameworks typically resort to either coarse discretization of continuous variables, leading to a curse of dimensionality and precision loss, or utilizing soft reward penalties that lack theoretical guarantees for long-term physical constraints, i.e., energy budgets. To address this convergence instability, our proposed Lyapunov-guided hierarchical architecture, which decouples long-term hard constraints and resolves the complex MINLP by mapping coupled hybrid decision variables into a unified discrete structural action space, paired with exact closed-form physical execution.


\section{Empirical Observations and Motivations}

This section presents a series of trace-driven empirical observations using ResNet-18 and CIFAR-10-C in order to reveal the physical failure mechanisms of edge vision models under non-stationary environments and to provide solid physical justifications for both our mathematical formulation of Section IV and our original solution for action space dimensionality reduction of Section VI.

\subsection{Vulnerability of edge vision model to low-level corruptions}

UAVs serving as MEC edge nodes frequently encounter highly dynamic environments, i.e., sudden fog or illumination drops, during continuous flight, inducing severe visual corruptions that directly degrade vision model reliability.


To quantify this vulnerability, we expose a pre-trained baseline model to varying severities of fog and contrast corruptions. As depicted in Fig. \ref{empirical_observations}(a), inference accuracy decreases non-linearly under environmental stress: fog corruption significantly destroys high-frequency spatial features, causing an immediate degradation (from 92.1\% to 47.4\%) at Severity 1, whereas contrast corruption remains initially stable before crashing below 20\% at Severity 5.

\textbf{Insight 1:} This unpredictable, non-linear degradation invalidates static deployments and periodic, AoI-based updates, necessitating a dynamic, on-demand update trigger ($\alpha_t \in \{0, 1\}$) tailored to real-time stress.

\subsection{Asymmetric layer sensitivity and trade-off}

Given strict UAV energy and bandwidth constraints, continuous full-model updates are impractical. To explore the optimal balance between communication cost and accuracy recovery, we investigate SASS.


Unlike traditional back-to-front fine-tuning for task shifts, these environmental corruptions primarily corrupt low-level physical pixels. If shallow receptive fields fail to extract reliable spatial features, deep semantic layers cannot recover the lost information, a phenomenon supported by Data Processing Inequality $I(Y;\hat{Y}) \le I(Y;H_k)$. Therefore, front-to-back synchronization is generally more resource-efficient than back-to-front updates under significant low-level physical corruptions. So, we propose the Front-to-Back Cumulative Update strategy. This mechanism inherently demands a sensitivity-aware threshold $\tau$ (see Section IV.B), where only the neural blocks most vulnerable to current environmental shift are selected for synchronization. Fig. \ref{empirical_observations} (b) reveals two critical physical properties of DNNs: (1) Diminishing marginal returns: Updating only the shallowest block (Conv1+Blk1) requires a minimal static patch size (0.574 MB) yet generates significant accuracy recovery (25.14\%). Conversely, extending updates to the deepest layers hits an accuracy ceiling (65.78\%) but incurs an exponential physical patch size explosion (42.662 MB). (2) Block-wise physical topology: CNN architectures inherently introduce discrete boundaries for feature transmission.

\textbf{Insight 2}: Block-wise structure and tradeoff curve legitimize continuous physical patch-size functions in Section IV and inspire our Action Space Dimensionality Reduction (Section VI), which maps complex continuous optimization into discrete macro-block actions to guarantee DRL convergence.

\begin{figure}[tbp]
    \centering
    \subfloat[Accuracy degradation]{\includegraphics[width=0.32\columnwidth]{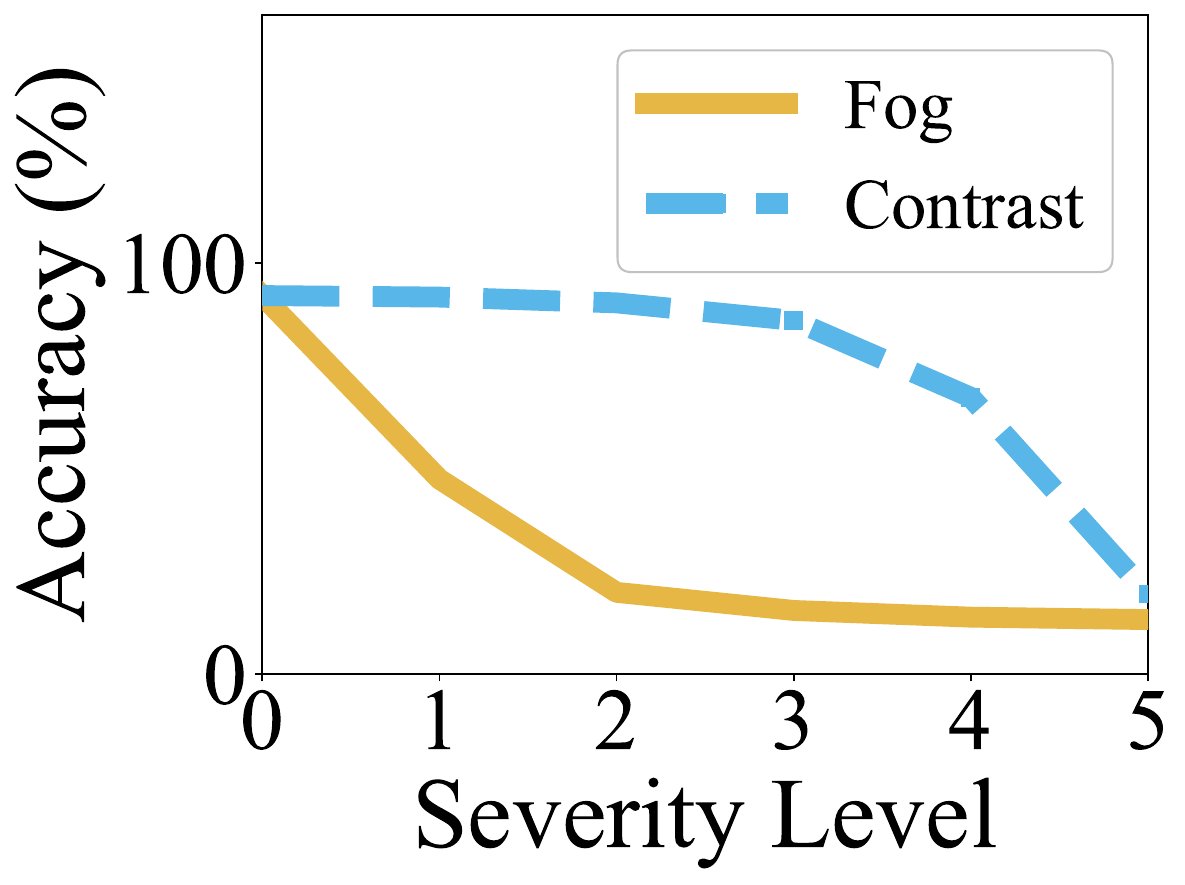}}
    \hfil 
    \subfloat[Cumulative patch-size trade-off]{\includegraphics[width=0.32\columnwidth]{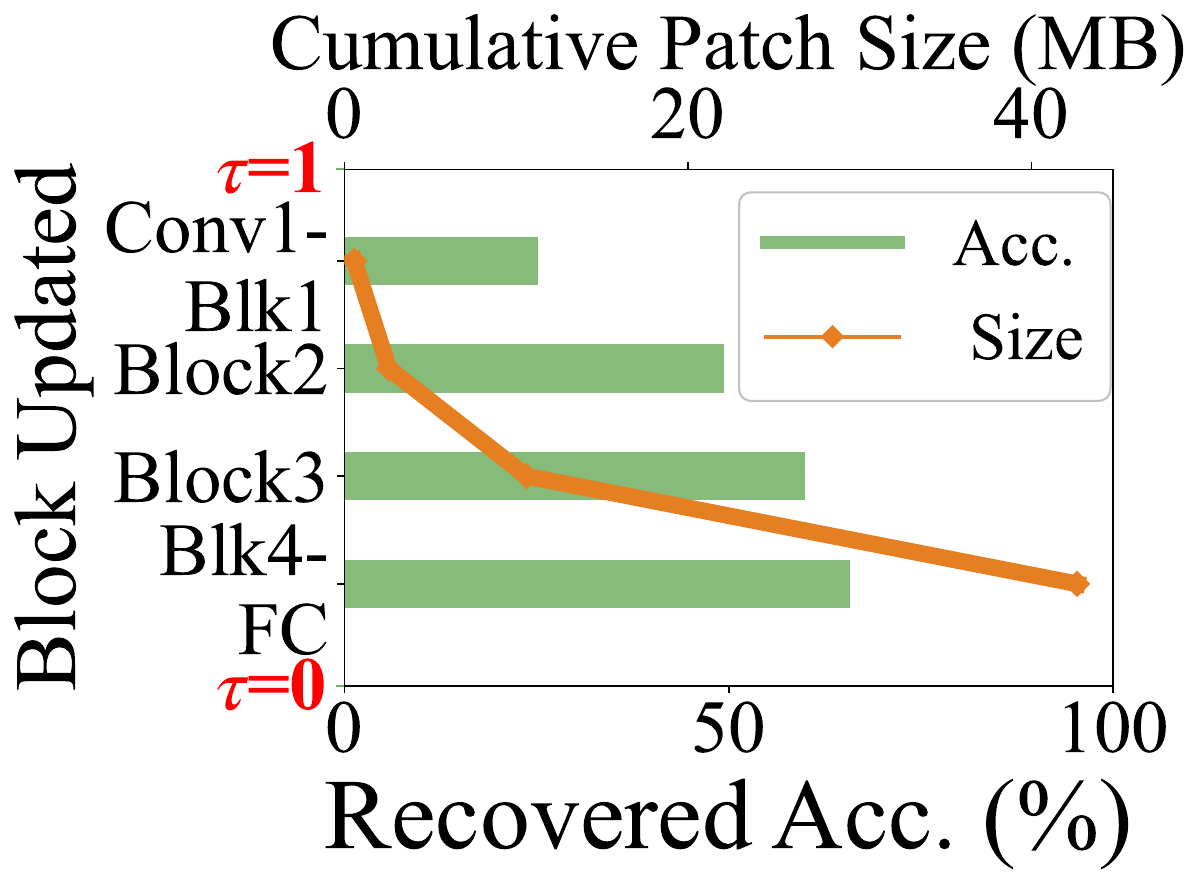}}
    \hfil
    \subfloat[OSDR correlation]{\includegraphics[width=0.32\columnwidth]{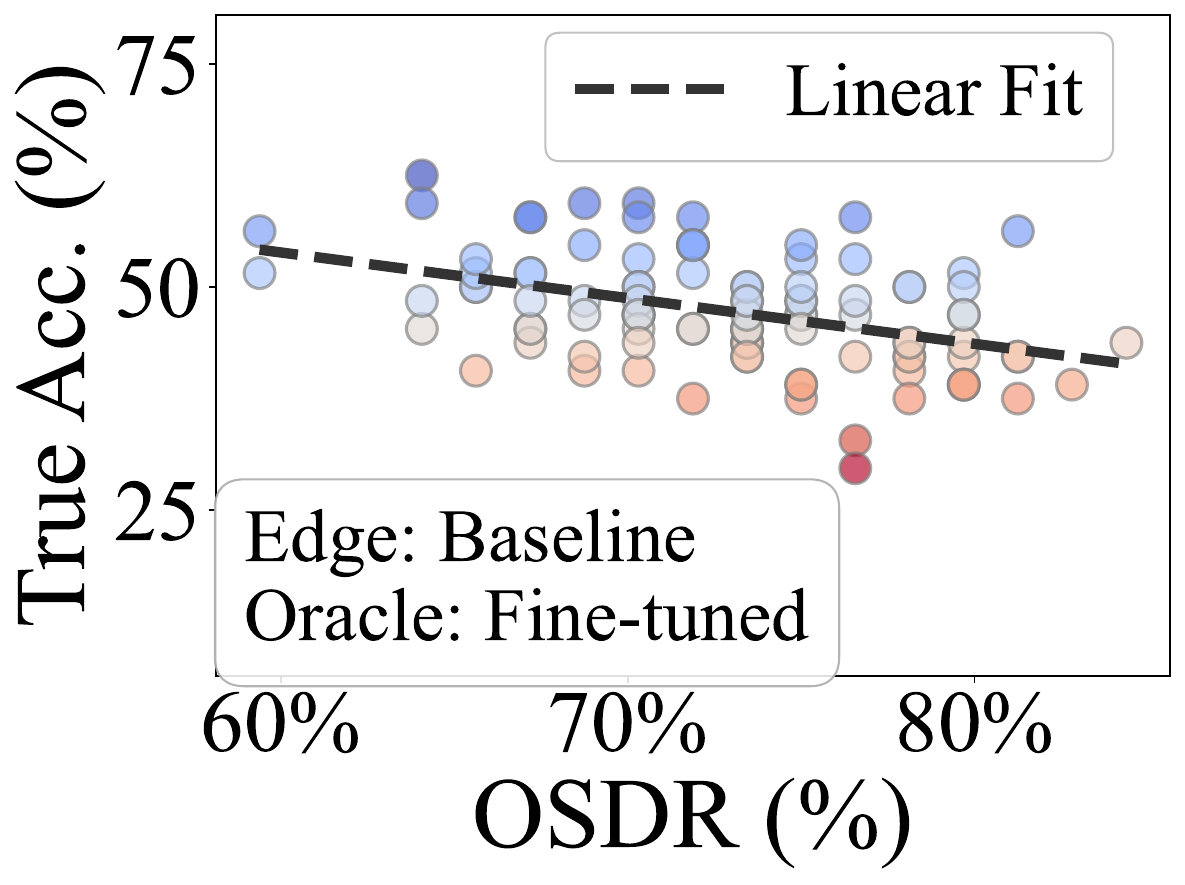}}
    
    \caption{
        Empirical observations on ResNet-18 and CIFAR-10-C: (a) non-linear accuracy degradation under corruption severity; (b) trade-off between recovered accuracy and cumulative patch size; (c) correlation between OSDR and true accuracy drop.}
    \label{empirical_observations}
\end{figure}

\subsection{Semantic Divergence as an Online Performance Proxy}

In real-world UAV MEC applications, ground-truth labels for real-time visual data are absent, rendering direct inference accuracy calculation impossible. To dynamically trigger SASS, we shift the optimization paradigm from AoI to VoU by defining a label-free proxy.

Specifically, we introduce OSDR to measure the prediction divergence between the deployed edge model and a server-hosted oracle on identical unlabelled batches. Practically, the oracle is updated offline periodically (e.g., weekly), remaining quasi-static during flights. At each slot, the UAV offloads a lightweight probe batch (32 compressed frames, 64 KB uplink overhead) to calculate the OSDR via the server-hosted oracle. Given ground server's parallel computing capability, evaluating a batch size of 32 maximizes GPU utilization under a negligible inference latency, while its transmission overhead over the 2 MHz link remains millisecond level. This fixed systemic overhead is accounted for in base budgets; although the oracle might be temporally stale, the generalization bound formally incorporates its error. As seen in Fig. \ref{empirical_observations}(c), evaluating OSDR against true accuracies reveals a strong negative correlation. While empirical linear fitting cannot exhaustively capture all unpredictable visual corruptions, it provides a crucial motivation: minimizing disagreement intrinsically restricts edge errors, establishing OSDR as a valid proxy.

\begin{theorem}
    Conditioned on the historical filtration $\mathcal{F}_{t-1}$. Let $e_t = \text{Pr}(f_t(X) \neq Y)$ be the true edge error and $d_t = \text{Pr}(f_t(X) \neq g_t(X))$ be the true edge-oracle disagreement rate under drifted distribution $P_t$. If oracle error satisfies $\text{Pr}(g_t(X) \neq Y) \le \epsilon_t$, the empirical OSDR $\hat{d}_t$ estimated over $m=32$ conditionally i.i.d. probe samples guarantees: $\text{Pr}\left( \vert{}e_t - \hat{d}_t\vert{} \le \epsilon_t + \sqrt{\log(2/\delta)/2m} \;|\mathcal{F}_{t-1} \right) \ge 1 - \delta$.
    \label{Semantic Consistency of OSDR}
\end{theorem}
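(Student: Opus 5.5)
The proof splits the deviation $|e_t-\hat d_t|$ by the triangle inequality into a deterministic \emph{bias} term and a stochastic \emph{estimation} term:
\[
|e_t-\hat d_t| \le |e_t-d_t| + |d_t-\hat d_t|.
\]
The first term is controlled by the oracle error $\epsilon_t$. The second is controlled by a conditional Hoeffding inequality over the $m$ probe samples. Throughout, everything is conditioned on $\mathcal{F}_{t-1}$, so $f_t$, $g_t$ and $P_t$ are treated as fixed ($\mathcal{F}_{t-1}$-measurable) objects. The only randomness is then the fresh probe batch $X_1,\dots,X_m$, which is conditionally i.i.d. from $P_t$.

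\textbf{Step 1 (bias term).} We show $|e_t-d_t|\le \Pr(g_t(X)\neq Y)\le\epsilon_t$ using pointwise set inclusions. On the event $\{f_t(X)\neq Y\}$, either $g_t(X)\neq Y$ or $g_t(X)=Y\neq f_t(X)$. Hence
\[
\{f_t\neq Y\}\subseteq\{f_t\neq g_t\}\cup\{g_t\neq Y\},
\]
which gives $e_t\le d_t+\epsilon_t$. Symmetrically,
\[
\{f_t\neq g_t\}\subseteq\{f_t\neq Y\}\cup\{g_t\neq Y\},
\]
which gives $d_t\le e_t+\epsilon_t$. Both inequalities hold conditionally on $\mathcal{F}_{t-1}$ because the probabilities are taken under $P_t$ with $f_t,g_t$ fixed. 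This step is deterministic.

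\textbf{Step 2 (estimation term).} Define $Z_i=\mathbf{1}\{f_t(X_i)\neq g_t(X_i)\}\in\{0,1\}$, so that $\hat d_t=\frac1m\sum_i Z_i$. Conditionally on $\mathcal{F}_{t-1}$, the $Z_i$ are i.i.d. and bounded in $[0,1]$ with mean $d_t$. The two-sided Hoeffding inequality gives
\[
\Pr\big(|\hat d_t-d_t|>s\mid\mathcal{F}_{t-1}\big)\le 2e^{-2ms^2}.
\]
Setting $2e^{-2ms^2}=\delta$ yields $s=\sqrt{\log(2/\delta)/(2m)}$. Note that the $Z_i$ require no labels, which is the point of the proxy.

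\textbf{Step 3 (combine).} On the complement of the Hoeffding bad event, which has conditional probability at least $1-\delta$, the triangle inequality together with Steps 1 and 2 gives
\[
|e_t-\hat d_t|\le\epsilon_t+\sqrt{\log(2/\delta)/(2m)}.
\]
With $m=32$ this is exactly the claim. The statement's expression $\sqrt{\log(2/\delta)/2m}$ should be read as $\sqrt{\log(2/\delta)/(2m)}$.

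\textbf{Main obstacle.} The mathematics is elementary; the subtle part is measurability. We must justify that the probe batch is drawn independently of $f_t$ and $g_t$ given $\mathcal{F}_{t-1}$. If the model $f_t$ were updated using the same probe batch, Hoeffding would fail, because the $Z_i$ would no longer be i.i.d. given a fixed $f_t$. I would therefore state explicitly that $f_t$ and $g_t$ are $\mathcal{F}_{t-1}$-measurable and the probe is fresh at slot $t$. This is exactly the role of the conditional i.i.d. hypothesis, and the quasi-static oracle assumption supports it.
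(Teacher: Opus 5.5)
Your proof is correct and follows essentially the same route as the paper. Both arguments use a triangle-inequality split into a deterministic bias term bounded by the oracle error, plus a conditional Hoeffding bound on $|\hat d_t-d_t|$ with $f_t,g_t$ fixed given $\mathcal{F}_{t-1}$; for the bias term the paper uses the pointwise bound $\bigl|\mathbf{1}\{f_t(X)\neq Y\}-\mathbf{1}\{f_t(X)\neq g_t(X)\}\bigr|\le \mathbf{1}\{g_t(X)\neq Y\}$ and takes expectations, which is equivalent to your two set inclusions, and your remarks that only one probabilistic event is involved (so the paper's ``union bound'' is not really needed) and that the probe batch must be fresh relative to $f_t$ are accurate refinements rather than departures.
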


\textbf{\textit{Proof}}: Pointwise error decomposition yields $|1\{f_t(X) \neq Y\} - 1\{f_t(X) \neq g_t(X)\}| \le1\{g_t(X) \neq Y\}$. Taking expectations over $P_t$ implies the population bound $\vert{}e_t - d_t\vert{} \le \epsilon_t$. Given $\mathcal{F}_{t-1}$, because deployed edge model $f_t$ and server oracle $g_t$ are deterministic, applying Hoeffding's inequality to the empirical average $\hat{d}_t$ over $m$ samples yields $\text{Pr}\left(\vert{}\hat{d}_t - d_t\vert{} > \sqrt{\log(2/\delta)/2m} \;|\mathcal{F}_{t-1}\right) \le \delta$. Combining these bounds via the triangle inequality and union bound establishes the obtained conclusion.


\begin{proposition}
    Let $\pi_{\text{label}}$ and $\pi_{\text{OSDR}}$ denote the idealized label-aware policy and the proposed empirical OSDR-driven policy, respectively. Under the conditions of \textit{Theorem} 3.1, the time-averaged regret $R(T) = \frac{1}{T}\mathbb{E}\left[\sum_{t=0}^{T-1}(\text{Cost}(\pi_{\text{OSDR}};t)-\text{Cost}(\pi_{\text{label}};t))\right]$ satisfies: $\lim_{T\rightarrow\infty} R(T) \le W_1 \cdot \left( \epsilon_{\text{max}} + \sqrt{\log(2/\delta)/2m} \right)$ where $W_1 > 0$ is a Lipschitz constant of the per-slot cost w.r.t. the trigger metric and $\epsilon_{\text{max}} = \sup_t \epsilon_t$. The asymptotic performance gap is thus strictly bounded by the finite-sample tracking error of the empirical proxy. 
    
    \label{Asymptotic Optimality of OSDR-Driven Policy}
\end{proposition}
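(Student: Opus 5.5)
The plan is a per-slot coupling argument: run both policies on the same sample path and bound their per-slot cost gap by the proxy error of Theorem~\ref{Semantic Consistency of OSDR}. Model the trigger as a policy that maps the current state plus a scalar semantic metric to an action. $\pi_{\text{label}}$ receives the true edge error $e_t$, and $\pi_{\text{OSDR}}$ receives the empirical proxy $\hat{d}_t$. Both policies share the same randomness (channel states, corruption process, and the same $\mathcal{F}_{t-1}$-measurable queue and model states). The per-slot cost is then a function $\text{Cost}(\cdot;t)=C_t(\text{metric})$ of the trigger metric. The Lipschitz assumption in the statement gives
\begin{equation*}
|\text{Cost}(\pi_{\text{OSDR}};t)-\text{Cost}(\pi_{\text{label}};t)| \le W_1\,|e_t-\hat{d}_t|.
\end{equation*}
I would read $W_1$ as also absorbing any propagation of a single-slot deviation into future states, for instance via stability of the virtual queues. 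This reduces the problem to bounding $\mathbb{E}[|e_t-\hat{d}_t|]$ uniformly in $t$.

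Next, I condition on $\mathcal{F}_{t-1}$ and apply Theorem~\ref{Semantic Consistency of OSDR}. Let $\eta=\sqrt{\log(2/\delta)/2m}$ and let $G_t=\{|e_t-\hat{d}_t|\le \epsilon_t+\eta\}$, so that $\Pr(G_t^c\mid\mathcal{F}_{t-1})\le\delta$. Since $e_t$ and $\hat{d}_t$ both lie in $[0,1]$, I split the expectation over $G_t$ and $G_t^c$:
\begin{equation*}
\mathbb{E}[|e_t-\hat{d}_t|\mid\mathcal{F}_{t-1}]\le (\epsilon_t+\eta)+\delta\le \epsilon_{\max}+\eta+\delta .
\end{equation*}
Taking the tower expectation, summing over $t=0,\dots,T-1$, and dividing by $T$ gives $R(T)\le W_1(\epsilon_{\max}+\eta+\delta)$ for every $T$, so the same bound holds for the $\limsup$.

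The main obstacle is the leftover $W_1\delta$ term from the failure event, since the stated bound has no such term. I would try two ways to remove it. The first is to read the statement as holding with probability at least $1-\delta$ per slot, that is, on the good event. The second is to integrate the tail bound instead of the crude split, using $\mathbb{E}|\hat{d}_t-d_t|\le\int_0^\infty 2e^{-2ms^2}ds=\sqrt{\pi/(2m)}$. That yields $\mathbb{E}|e_t-\hat d_t|\le\epsilon_t+\sqrt{\pi/2m}$. For $\delta\le 2e^{-\pi/4}\approx 0.91$ (which covers the usual small values of $\delta$) we have $\log(2/\delta)\ge\pi/4$, hence $\sqrt{\pi/2m}\le\sqrt{2\log(2/\delta)/m}=2\eta$. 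This still leaves a constant factor, which would have to be folded into $W_1$.

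A secondary issue is that the two policies drive different state trajectories, so the coupling is not literally per-slot. I would handle this by comparing each policy against the one-step deviation of the other on a common state. That argument needs cost-to-go Lipschitzness, which I would assume is built into $W_1$, as the proposition's phrasing "Lipschitz constant of the per-slot cost w.r.t. the trigger metric" suggests.
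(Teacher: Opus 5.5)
The paper states this proposition without proof; only the informal Insight 3 follows it. Your chain (Lipschitz in the metric, Theorem 3.1 slot by slot, Ces\`aro average, $\limsup$) is the argument the wording invites, but it has a genuine gap at the per-slot inequality $|\text{Cost}(\pi_{\text{OSDR}};t)-\text{Cost}(\pi_{\text{label}};t)|\le W_1|e_t-\hat d_t|$.

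Theorem 3.1 controls $|e_t-\hat d_t|$ only when both quantities refer to the same deployed model $f_t$ under the same history. The two policies trigger differently, so along their own trajectories $f_t$, $Q_{sem}(t)$ and $Z(t)$ (hence $e_t$ and $\lambda_t$) depend on the policy. Your $e_t$ belongs to $\pi_{\text{label}}$'s model and your $\hat d_t$ to $\pi_{\text{OSDR}}$'s, and no hypothesis relates them. A Lipschitz constant in the metric at a fixed state cannot bridge this. Nothing in Theorem 3.1's purely statistical hypotheses makes state discrepancies die out either: the map $Z\mapsto\max[0,Z+E-P_{avg}^{budget}]$ is never contractive, and the $Q_{sem}$ map contracts only in update slots. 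Discrepancies created at different slots therefore add up rather than decay, and the per-slot cost gap need not stay proportional to $|e_t-\hat d_t|$. Saying that $W_1$ ``absorbs'' this assumes exactly the nontrivial part.

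Your hybrid idea is the right repair, but you need to carry it out and state what it requires. Let $H_k$ follow $\pi_{\text{OSDR}}$ on slots $0,\dots,k-1$ and $\pi_{\text{label}}$ afterwards. Then $H_0=\pi_{\text{label}}$, $H_T=\pi_{\text{OSDR}}$, and $H_k,H_{k+1}$ share the state law at slot $k$, where both metrics refer to the same $f_k$ and Theorem 3.1 applies. Each telescoping term is at most $L_\pi\bigl(\mathrm{sp}(c)+\mathrm{sp}(V^{\text{label}}_{k+1:T})\bigr)\,\mathbb{E}|e_k-\hat d_k|$. Here $\mathrm{sp}$ is the span ($\max-\min$), $c$ is the per-slot cost, $V^{\text{label}}_{k+1:T}$ is the cost-to-go of $\pi_{\text{label}}$, and $L_\pi$ is the total-variation Lipschitz constant of the trigger's action law in the metric. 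The proposition then holds with $W_1=L_\pi(\mathrm{sp}(c)+S)$, where $S=\sup_{k,T}\mathrm{sp}(V^{\text{label}}_{k+1:T})$, under two conditions. First, $S<\infty$: a uniform stability or mixing hypothesis, e.g.\ bounded queues drained geometrically by updates, which is not among the stated conditions. Second, $L_\pi<\infty$: true for a softmax PPO actor, false for a hard-threshold trigger, where an arbitrarily small metric error flips $\alpha_t$. State these as explicit assumptions rather than as a reading of $W_1$.

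On the $\delta$ term, your first fix is not available. $R(T)$ is an expectation, and per-slot good events cannot be combined over $T$ slots without paying $T\delta$. Your second fix works only after inflating $W_1$ by $2$, or for $\delta\le 2e^{-\pi}$. A cleaner bound removes $\delta$ for every $\delta\in(0,1]$ with no constant loss. Conditionally on $\mathcal{F}_{t-1}$, $\hat d_t$ is a mean of $m$ i.i.d.\ $\mathrm{Bernoulli}(d_t)$ variables, so by Jensen
\[
\mathbb{E}\bigl[|\hat d_t-d_t|\,\big|\,\mathcal{F}_{t-1}\bigr]\le\sqrt{\frac{d_t(1-d_t)}{m}}\le\frac{1}{2\sqrt{m}}<\sqrt{\frac{\log 2}{2m}}\le\sqrt{\frac{\log(2/\delta)}{2m}}.
\]
Combined with $|e_t-d_t|\le\epsilon_t$, which holds deterministically given $\mathcal{F}_{t-1}$, this gives $\mathbb{E}[|e_t-\hat d_t|\mid\mathcal{F}_{t-1}]\le\epsilon_t+\sqrt{\log(2/\delta)/2m}$ exactly. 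The $\delta$ in the statement is therefore vestigial: the expectation bound even holds with $1/(2\sqrt m)$ in place of the Hoeffding radius.
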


\textbf{Insight 3:} Although \textit{Theorem} \ref{Semantic Consistency of OSDR} provides a standard generalization bound, online trigger reliability under an imperfect oracle is sustained by the closed-loop learning. By employing PPO, the framework implicitly learns to absorb proxy calibration errors and temporal tracking deviations through continuous environmental interaction, as evidenced by UTP results (Sec. VII.C), thereby stabilizing $Q_{sem}(t)$  (Sec. IV.B).


\section{System Model}

\begin{figure}[tbp]
	\centering
	\includegraphics[width=\linewidth]{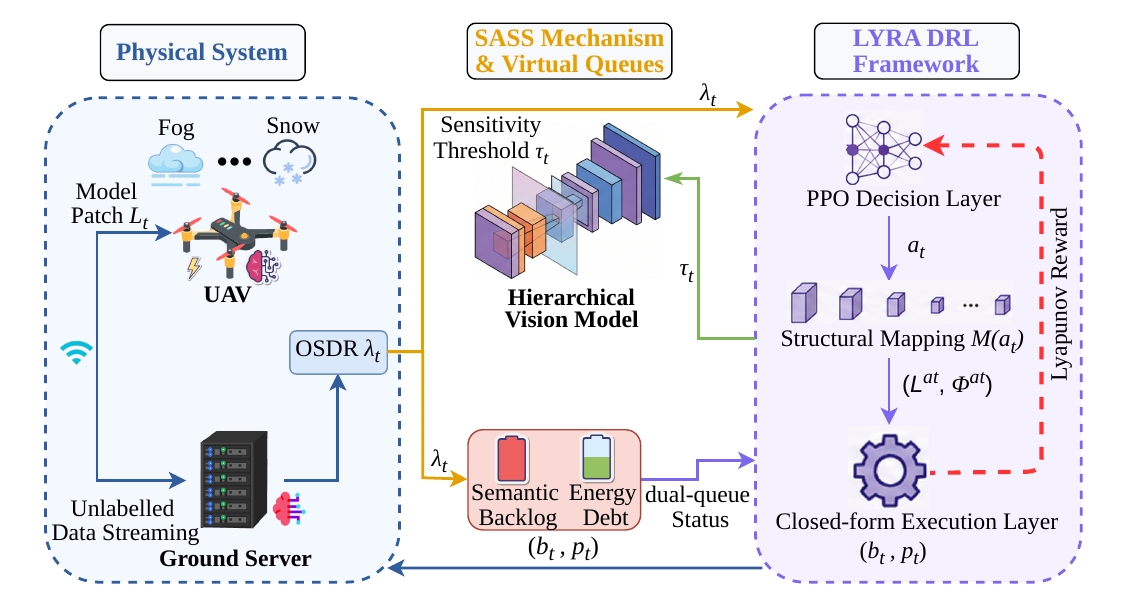}
	\centering
	\caption{The overall architecture of the proposed LYRA framework. It illustrates the real-time data streaming in the physical system, the SASS mechanism driven by the label-free OSDR, and the Lyapunov-guided DRL engine that performs action space dimensionality reduction for optimal resource execution.}
	\label{SystemModel_Framework}
\end{figure}

We consider a MEC-enabled UAV edge system performing mission-critical visual tasks (e.g., real-time target classification). To execute these tasks, the UAV deploys an edge vision model with a hierarchical architecture. The system operates over a finite horizon of discrete time slots $t \in \{0, 1, \dots, T-1\}$. Subjected to low-level environmental corruptions (e.g., dynamic fog or illumination variations) during flight, the UAV must continuously maintain its semantic fidelity by synchronizing its model with a ground-based oracle server. At each slot $t$, the system formulates a joint scheduling and update control vector $\boldsymbol{x}_t \triangleq \{\alpha_t, \tau_t, b_t, p_t\}$, where $\alpha_t \in \{0, 1\}$ is the discrete semantic update trigger, $\tau_t \in [0, 1]$ dictates the layer-wise structural synchronization depth, and $b_t$ and $p_t$ denote the continuous allocated communication bandwidth and server's downlink transmit power, respectively. The overall architecture of the proposed system, bridging the physical operational environment, the virtual queue dynamics, and the LYRA decision framework, is illustrated in Fig. \ref{SystemModel_Framework}.

\subsection{Wireless Communication Model}

We adopt a standard wireless channel model to characterize the link between the UAV and the ground station \cite{r39,r40,r41}. The channel power gain $h_t$ is modeled by large-scale path loss and small-scale fading: $h_t = h_0 d_t^{-\beta} \zeta_t$, where $h_0$ is the reference gain, $d_t$ is the time-varying distance, $\beta$ is the path loss exponent, and $\zeta_t$ denotes small-scale fading. Assume that $N_0$ is the noise power spectral density, according to Shannon's formula, the achievable transmission rate $R_t$ is:
\begin{equation}
    R_t(b_t, p_t) = b_t \log_2 \left( 1 + p_t\cdot h_t/N_0 b_t \right)
\end{equation}

\subsection{Semantic Evolution and Risk Queue}
To quantify VoU under environmental corruption, we define an Accumulated Unserved Semantic Risk Queue $Q_{\text{sem}}(t)$ to track the temporal accumulation of operational risk:
\begin{equation}
    Q_{sem}(t+1) = \max [0, Q_{sem}(t) - \mu_t(\alpha_t, \tau_t) + \lambda_t]
\end{equation}

where $\lambda_t \triangleq \operatorname{OSDR}(t)$ represents the instantaneous semantic risk rate in slot $t$. Maintaining a degraded state without synchronization accumulates decision risk over time, which is eventually served by the structural update reduction $\mu_t$. This reduction $\mu_t$ is governed by the SASS mechanism:
\begin{equation}
    \mu_t(\alpha_t, \tau_t) = \alpha_t \cdot \Phi(\tau_t) \cdot Q_{sem}(t)
    \label{Divergence Reduction}
\end{equation}

Here, $\tau_t \in [0, 1]$ formalizes the layer-wise sensitivity threshold. Since shallow layers processing raw corrupted pixels are most vulnerable to environmental shifts, a high $\tau_t$ restricts updates to the shallowest, most severely degraded neural blocks, while lowering it propagates structural updates deeper. This front-to-back depth dictates the semantic recovery efficiency $\Phi(\tau_t) = 1 - \tau_t^\gamma$ (with shape parameter $\gamma \ge 0$); thus, a smaller $\tau_t$ yields a larger $\Phi(\tau_t)$ for exhaustive divergence elimination. Crucially, Eq. (\ref{Divergence Reduction}) essentially formulates a multiplicative recovery model, where $\Phi(\tau_t)$ characterizes the fractional recovery capability determined by the synchronization depth, while $Q_{sem}(t)$ represents the currently accumulated recoverable semantic divergence. This physically aligns with the proportional catch-up effect in tracking and backlog-draining systems, where patching a more degraded model tends to yield a larger absolute fidelity restoration.

\subsection{Adaptive Patching and Delay Constraint}

Motivated by the structural sparsity observed in Section III.B, we approximate the cumulative patch size using an exponential fitting model:
\begin{equation}
    L_t(\tau_t) = L_{max} \cdot e^{-K\tau_t}
\end{equation}

where $L_{max}$ is the full model size and $K$ is the sparsity coefficient. This approximation captures the rapidly diminishing cumulative parameter volume across hierarchical synchronization depths in practical DNN architectures. To ensure the real-time nature of inspection, the transmission delay $D_t$ must not exceed a maximum threshold $D_{max}$ (which inherently absorbs the fixed probe uplink transmission and server-side oracle inference latency detailed in Sec. III.C):
\begin{equation}
    D_t = \alpha_t\cdot L_t(\tau_t)/R_t(b_t, p_t) \le D_{max}
\end{equation}

\subsection{Energy Consumption and Lyapunov Virtual Queue}


The system total energy consumption $E_{total}(t)$ per slot includes the server's downlink transmission energy and the UAV's baseline operational footprint. Assuming a unit-normalized slot duration:
\begin{equation}
    E_{total}(t) = \alpha_t \cdot p_t \cdot L_t(\tau_t)/R_t(b_t, p_t) + E_{other}
\end{equation}

where $E_{\text{other}}$ represents the baseline energy for UAV computing and sensing including the invariant uplink transmission energy for the highly compressed OSDR probe batch.

To satisfy the long-term energy budget $\mathbb{E}[E_{total}(t)] \le P_{avg}^{budget}$, we introduce a Lyapunov virtual queue $Z(t)$:
\begin{equation}
    Z(t+1) = \max [0, Z(t) + E_{total}(t) - P_{avg}^{budget}]
\end{equation}

$Z(t)$ represents the energy debt. Together with $Q_{sem}(t)$, it constitutes the real-time dual-queue status fed into the LYRA framework (see Fig. \ref{SystemModel_Framework}), effectively guiding the DRL agent to prioritize energy saving when the deficit is high.

\section{Problem Formulation and Lyapunov Optimization}
This section formulates the dynamic joint scheduling and model update process in the MEC system as a stochastic optimization problem. Since $\lambda_t$ and $h_t$ are highly stochastic with unknown prior distributions, we leverage Lyapunov optimization to decouple the long-term time-averaged constraints into a series of deterministic, per-slot optimization sub-problems.

\subsection{Stochastic Optimization Problem Formulation}
The overarching objective of the proposed MEC update and scheduling framework is to minimize the long-term time-averaged system total cost including bandwidth and energy cost while guaranteeing semantic fidelity and energy budgets of the system. Let $\boldsymbol{x}_t \triangleq \{\alpha_t, \tau_t, b_t, p_t\}$ denote the joint control vector. The stochastic optimization problem is formulated as:
\begin{align}
    \textbf{P1}: &\min_{\{\boldsymbol{x}_t\}} \lim_{T \to \infty} \frac{1}{T} \sum_{t=0}^{T-1} \mathbb{E}[\omega_b b_t+\omega_eE_{total}(t)] \nonumber\\ 
    \text{s.t.} \quad &\textbf{C1:} \lim_{T \to \infty} \frac{1}{T} \sum_{t=0}^{T-1} \mathbb{E}[Q_{sem}(t)] < \infty  \nonumber\\
    &\textbf{C2:} \lim_{T \to \infty} \frac{1}{T} \sum_{t=0}^{T-1} \mathbb{E}[E_{total}(t)] \le P_{avg}^{budget}  \nonumber\\
    &\textbf{C3:} \alpha_t\cdot L_t(\tau_t)/R_t(b_t, p_t) \le D_{max}, \quad \forall t  \nonumber\\
    &\textbf{C4:} \quad p_t \in [0, P_{max}], b_t \in [0, B_{max}], \nonumber\\
    &\quad\quad\quad \alpha_t \in \{0, 1\}, \tau_t \in [0, 1]
\end{align}

where $\omega_b$ is the bandwidth weight. C1 imposes strong stability on the accumulated semantic risk queue, ensuring long-term operational risk remains bounded against drift. C2 enforces the long-term energy sustainability of the system. C3 dictates the hard instantaneous delay constraint for critical vision tasks, and C4 specifies physical boundary constraints.

\subsection{Lyapunov Optimization Formulation and Problem Analysis}

Problem $\textbf{P1}$ is a MINLP problem coupled across time, which is generally intractable. To resolve this, we define the joint system state vector as $\boldsymbol{\Theta}(t) \triangleq [Q_{sem}(t), Z(t)]$, encompassing both semantic and energy deficits. The Lyapunov function is constructed as $L(\boldsymbol{\Theta}(t)) = \frac{1}{2}Q_{sem}(t)^2 + \frac{1}{2}Z(t)^2$ to scalarize the system congestion.

The one-step conditional Lyapunov drift is defined as $\Delta(\boldsymbol{\Theta}(t)) = \mathbb{E}[L(\boldsymbol{\Theta}(t+1)) - L(\boldsymbol{\Theta}(t)) | \boldsymbol{\Theta}(t)]$. By incorporating the objective function into the drift, we aim to minimize the Drift-plus-Penalty bound:
\begin{equation}
    \Delta(\boldsymbol{\Theta}(t)) + V \cdot \mathbb{E}[\omega_b b_t +\omega_eE_{total}(t) | \boldsymbol{\Theta}(t)]
\end{equation}

where $V > 0$ is a control parameter balancing the cost minimization and queue stability.

By Eq. (2) and (7), we can establish a rigorous upper bound for (9). Extracting the terms strictly governed by the control variable $\boldsymbol{x}_t$ and dropping the uncontrollable constants, the minimization of the upper bound is mathematically equivalent to solving the following deterministic, per-slot optimization problem ($\textbf{P2}$) at each time slot $t$:
\begin{gather}
    \min_{\boldsymbol{x}_t} \left\{ V \omega_b b_t - Q_{sem}(t) \mu_t(\alpha_t, \tau_t) + (V\omega_e + Z(t)) E_{total}(t) \right\} \nonumber \\
    \text{s.t.} \quad \textbf{C3}, \textbf{C4}
\end{gather}

\begin{theorem}
    Suppose the environmental drift rate $\lambda_t$ and channel gain $h_t$ are i.i.d. with bounded support. Assume the risk backlog operates within a truncated state space $Q_{sem}(t) \le Q_{sem}^{max}$, reflecting the maximum tolerable semantic divergence in mission-critical UAVs. Consequently, the multiplicative drain structure (Eq. 3) guarantees a bounded single-slot divergence reduction within this regime, yielding a finite Lyapunov constant $B$. Assuming exact optimal per-slot minimization of $\textbf{P2}$, the time-averaged system total cost satisfies:$\lim\limits_{T\to\infty} \frac{1}{T} \sum_{t=0}^{T-1} \mathbb{E}[Cost(t)] \le Cost^* + \frac{B}{V}$. While this bound serves as a theoretical benchmark for independent single-slot optimization, LYRA leverages this mathematical structure to sustain long-horizon trajectory performance under dynamic environmental variations.
\end{theorem}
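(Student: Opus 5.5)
The argument is the standard drift-plus-penalty analysis in Neely's framework, adapted to the multiplicative drain in Eq. (\ref{Divergence Reduction}). Let $Cost(t)=\omega_b b_t+\omega_e E_{total}(t)$.

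\emph{Step 1: a per-slot drift bound.} Use $(\max[0,q-\mu+\lambda])^2\le (q-\mu+\lambda)^2$. Expanding gives
\begin{equation*}
\tfrac12 Q_{sem}(t+1)^2-\tfrac12 Q_{sem}(t)^2 \le \tfrac12(\mu_t^2+\lambda_t^2)+Q_{sem}(t)(\lambda_t-\mu_t).
\end{equation*}
For the energy queue, the same step gives
\begin{equation*}
\tfrac12 Z(t+1)^2-\tfrac12 Z(t)^2\le \tfrac12(E_{total}(t)-P_{avg}^{budget})^2+Z(t)(E_{total}(t)-P_{avg}^{budget}).
\end{equation*}
The quadratic terms must be bounded uniformly, and this is where the truncation $Q_{sem}(t)\le Q_{sem}^{max}$ is needed. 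Since $\Phi\in[0,1]$ and $\alpha_t\in\{0,1\}$, we have $\mu_t\le Q_{sem}^{max}$. Also $\lambda_t\le 1$, because it is a disagreement rate. For the energy term, C3 gives $\alpha_t L_t/R_t\le D_{max}$, so $E_{total}(t)\le P_{max}D_{max}+E_{other}$. Collecting these yields the constant
\begin{equation*}
B=\tfrac12\big[(Q_{sem}^{max})^2+\lambda_{max}^2+\max\{P_{avg}^{budget},\,P_{max}D_{max}+E_{other}\}^2\big].
\end{equation*}
Adding $V\,\mathbb{E}[Cost(t)\mid\boldsymbol{\Theta}(t)]$ shows that the drift-plus-penalty (9) is at most $B$, plus constant terms, plus the conditional expectation of the P2 objective. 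This confirms that P2 minimizes this bound.

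\emph{Step 2: comparison with a stationary randomized policy.} Since $(\lambda_t,h_t)$ are i.i.d. with bounded support, the usual characterization theorem applies under a Slater-type feasibility assumption, which I would state explicitly. It gives an $\omega$-only policy $\boldsymbol{x}^*_t$, depending only on the current $(\lambda_t,h_t)$, with the following properties:
\begin{itemize}
\item it achieves $\mathbb{E}[Cost^*(t)]=Cost^*$, where $Cost^*$ is the optimal value of P1;
\item it satisfies $\mathbb{E}[E^*_{total}(t)]\le P_{avg}^{budget}$;
\item it satisfies $\mathbb{E}[\lambda_t-\mu^*_t]\le 0$.
\end{itemize}
This policy is independent of $\boldsymbol{\Theta}(t)$. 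Because LYRA exactly minimizes P2 over all actions feasible under C3 and C4, its bound is no larger than the one evaluated at $\boldsymbol{x}^*_t$. The queue-weighted terms for $\boldsymbol{x}^*_t$ are then nonpositive in conditional expectation, so
\begin{equation*}
\Delta(\boldsymbol{\Theta}(t))+V\,\mathbb{E}[Cost(t)\mid\boldsymbol{\Theta}(t)]\le B+V\,Cost^*.
\end{equation*}

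\emph{Step 3: telescoping.} Take total expectations and sum over $t=0,\dots,T-1$. Then drop $\mathbb{E}[L(\boldsymbol{\Theta}(T))]\ge 0$, divide by $VT$, and assume $L(\boldsymbol{\Theta}(0))<\infty$. Letting $T\to\infty$ gives the claimed bound $Cost^*+B/V$.

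The main obstacle is Step 2 for the semantic queue. The service $\mu_t=\alpha_t\Phi(\tau_t)Q_{sem}(t)$ depends on the queue state, so a policy with $\mathbb{E}[\lambda_t-\mu_t^*]\le0$ is not automatically state-independent. My first attempt is to use the truncation to argue that, for $Q_{sem}(t)$ above some level, the policy can choose $\alpha=1$ and a small $\tau$ so that $\mu_t\ge\lambda_{max}$. Below that level, the term $Q_{sem}(t)(\lambda_t-\mu^*_t)$ is at most $Q_{sem}^{max}\lambda_{max}$, which is bounded and can be absorbed into $B$. I would state plainly that this enlarges $B$, and that the exact-minimization assumption is what allows the comparison.
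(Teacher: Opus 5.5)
Your skeleton (a per-slot drift bound with a constant built from bounded $\mu_t$, $\lambda_t$ and $E_{total}(t)$, a comparison of the exact P2 minimizer with a reference action, then telescoping) is the standard drift-plus-penalty analysis that the paper cites in place of a proof. Steps 1 and 3 are fine. The gap is in Step 2, and the repair you sketch does not close it.

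Because $\mu_t=\alpha_t\Phi(\tau_t)Q_{sem}(t)\le Q_{sem}(t)$, no action at all gives $\mathbb{E}[\lambda_t-\mu_t\mid\boldsymbol{\Theta}(t)]\le 0$ when $0<Q_{sem}(t)<\mathbb{E}[\lambda_t]$. So the third bullet is unattainable, and, as you noticed, the characterization theorem does not apply anyway: it needs service that depends only on the action and the random event.

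Your threshold comparator also breaks the other two bullets. In states above your level it must update whatever $Z(t)$ is, and its energy can exceed $P_{avg}^{budget}$. The term $Z(t)\,\mathbb{E}[E_{total}(t)-P_{avg}^{budget}\mid\boldsymbol{\Theta}(t)]$ is then positive with an unbounded multiplier, since only $Q_{sem}$ is truncated. In addition, its cost exceeds $Cost^*$ in those states, which after dividing by $V$ leaves an $O(1)$ rather than an $O(1/V)$ error.

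The fix is the observation in your last sentence, applied in every state rather than only below a level. Since $\mu_t\ge 0$ and $Q_{sem}(t)\le Q_{sem}^{max}$, every feasible action satisfies $Q_{sem}(t)\,\mathbb{E}[\lambda_t-\mu_t\mid\boldsymbol{\Theta}(t)]\le Q_{sem}^{max}\lambda_{max}$. Put this into the constant and ask nothing semantic of the comparator. Then compare with the idle action $\alpha_t=0$, $b_t=p_t=0$:
\begin{itemize}
\item it satisfies C3--C4;
\item it has $E_{total}(t)=E_{other}\le P_{avg}^{budget}$, which is necessary for P1 to be feasible at all because $E_{total}(t)\ge E_{other}$ always, so the $Z$ cross term is nonpositive;
\item its per-slot cost $\omega_e E_{other}$ is a pointwise lower bound on any policy's cost, hence at most $Cost^*$.
\end{itemize}
This gives $\Delta(\boldsymbol{\Theta}(t))+V\,\mathbb{E}[Cost(t)\mid\boldsymbol{\Theta}(t)]\le B'+V\,Cost^*$ with $B'=B+Q_{sem}^{max}\lambda_{max}$, and your Step 3 concludes. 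No characterization theorem or Slater condition is needed, and the i.i.d.\ hypothesis is not even used.

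Be aware of what this exposes. Under the truncation hypothesis the bound is certified against never updating, and the semantic queue enters only through the constant. Within this argument, $B$ therefore cannot come from the bounded single-slot reduction alone, as the statement's wording suggests. That is harmless for the statement, whose $B$ is unspecified, but it means the result carries no information about semantic fidelity.
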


\textbf{\textit{Proof}}: The proof follows standard Lyapunov drift analysis \cite{r5} and is omitted for brevity.

By transforming $\textbf{P1}$ into $\textbf{P2}$, the intractable long-term time-averaged constraints are shifted into instantaneous queue-dependent penalties. To handle the continuous coupling of hybrid decision variables while capturing the temporal correlation of non-stationary environmental drifts, Section VI integrates this Lyapunov drift-plus-penalty structure into a long-horizon DRL framework, utilizing the per-slot penalty to guide a stable long-term reward formulation. 

\section{Lyapunov-Guided discrete DRL}

The deterministic sub-problem $\textbf{P2}$ is a non-convex MINLP, involving coupled discrete $\alpha_t$, continuous $\tau_t$, $p_t$ and $b_t$. Directly solving $\textbf{P2}$ using conventional DRL suffers from dimensionality disaster and poor convergence in hybrid action spaces. To address this, we propose a \textit{Lyapunov-Guided Hierarchical Joint Update Scheduling and Resource Allocation} (LYRA) framework, as detailed in the rightmost module of Fig. \ref{SystemModel_Framework}, consisting of a high-level Decision Layer (PPO) and a low-level Execution Layer (Closed-form Resource Allocation).

\subsection{Action Space Dimensionality Reduction via Structural Mapping}

To circumvent the significant convergence instability inherent in hybrid action spaces, we exploit the block-wise physical topology of DNNs (e.g., ResNet residual blocks, as observed in Section III.B) to design a structural mapping mechanism. Specifically, we collapse the joint continuous-discrete variables, i.e., the update trigger $\alpha_{t}$ and the sensitivity threshold $\tau_{t}$, into a unified, one-dimensional discrete action space $\mathcal{A}=\{0,1,...,K\}$. To bridge the PPO decision and the physical execution depicted in Fig. \ref{SystemModel_Framework}, the structural mapping function $\mathcal{M}(a_{t})$ is defined as:
\begin{equation}
    [\alpha_t, L_t, \Phi_t] = 
    \begin{cases} 
        [0, 0, 0], & \text{if } a_t = 0 \\
        [1, L(\tau^{(a_t)}), \Phi(\tau^{(a_t)})], & \text{if } a_t \in \{1, \dots, K\}
    \end{cases}
\end{equation}

where $\tau^{(a_t)}$ denotes the predefined discrete sensitivity levels corresponding to the hierarchical boundaries of DNN blocks. By discretizing the synchronization depth, this mapping mechanism effectively transforms the combinatorial search for $\{\alpha_{t},\tau_{t}\}$ into a lightweight selection of update granularities.

\begin{lemma}
    Assume there exist finite constants $C_{L}>0$ and $C_{\Phi}>0$ such that $|L(\tau_1) - L(\tau_2)|\leq C_L|\tau_1-\tau_2|$ and $|\Phi(\tau_1) - \Phi(\tau_2)|\leq C_{\Phi}|\tau_1-\tau_2|$ for $\forall \tau_1, \tau_2 \in[0,1]$. Let $\tau^*$ and $\hat{\tau}$ denote the optimal continuous synchronization threshold and its nearest discrete structural level induced by $M(a_t)$ respectively, they satisfy $|\tau^* - \hat{\tau}|\leq \Delta$. Then, $|J(\tau^*) - J(\hat{\tau})|\leq C\Delta$, where $C>0$ is a finite constant determined by the system parameters.
    
    \label{Per-Slot Structural Approximation Bound}
\end{lemma}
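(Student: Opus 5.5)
We interpret $J(\tau)$ as the per-slot objective of \textbf{P2} once the high-level decision has fixed $\alpha_t=1$ and the threshold $\tau$, and the low-level layer has chosen $(b_t,p_t)$ feasibly. Concretely, with $(b,p)$ held fixed,
\begin{equation*}
J(\tau) = V\omega_b b - Q_{sem}(t)^2\,\Phi(\tau) + (V\omega_e + Z(t))\Big(\frac{p\,L(\tau)}{R_t(b,p)} + E_{other}\Big).
\end{equation*}
Here we used Eq.~(\ref{Divergence Reduction}), which gives $Q_{sem}(t)\mu_t = Q_{sem}(t)^2\Phi(\tau)$. The plan is a term-by-term Lipschitz argument. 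We hold the resource pair $(b,p)$ fixed at the allocation chosen for $\tau^*$, assumed feasible for both thresholds as discussed below. The bandwidth term $V\omega_b b$ and the constant $E_{other}$ then cancel in $J(\tau^*)-J(\hat\tau)$, and only the $\Phi$ and $L$ terms remain.

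\textbf{Step 1.} We apply the triangle inequality:
\begin{equation*}
|J(\tau^*)-J(\hat\tau)| \le Q_{sem}(t)^2\,|\Phi(\tau^*)-\Phi(\hat\tau)| + (V\omega_e+Z(t))\frac{p}{R_t(b,p)}\,|L(\tau^*)-L(\hat\tau)|.
\end{equation*}

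\textbf{Step 2.} We insert the assumed Lipschitz bounds and $|\tau^*-\hat\tau|\le\Delta$. This gives the bound $\big(Q_{sem}(t)^2 C_\Phi + (V\omega_e+Z(t))\tfrac{p}{R_t}C_L\big)\Delta$.

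\textbf{Step 3.} We make the constant uniform, which requires bounding the prefactors.
\begin{itemize}
\item For $Q_{sem}(t)$ we use the truncation $Q_{sem}(t)\le Q_{sem}^{max}$ assumed in the preceding theorem.
\item For $p/R_t$ we use C3 at $\tau^*$. Write $R_t\ge L(\tau^*)/D_{max}$, and bound $L(\tau^*)\ge L_{max}e^{-K}>0$ via Eq.~(4). Together with $p\le P_{max}$ this gives $p/R_t \le P_{max}D_{max}e^{K}/L_{max}$.
\item For $Z(t)$ we need a bound $Z^{max}$. Since the per-slot energy increment is bounded (bounded $p$, $L$, and delay), we will either invoke the per-slot (conditional on current queue state) nature of the lemma, treating $Z(t)$ as a given finite state, or appeal to a bounded-queue regime analogous to $Q_{sem}^{max}$.
\end{itemize}
With these, $C = (Q_{sem}^{max})^2C_\Phi + (V\omega_e+Z^{max})P_{max}D_{max}e^{K}C_L/L_{max}$.

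\textbf{Step 4.} We handle the case $a_t=0$, i.e.\ the case where $\alpha=0$ is optimal. Both the continuous and the discrete action spaces contain this action exactly, so the gap is zero and the bound holds trivially.

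The main obstacle is feasibility. The allocation $(b,p)$ that is optimal for $\tau^*$ must remain delay-feasible for $\hat\tau$, since $L(\hat\tau)$ may exceed $L(\tau^*)$ when $\hat\tau<\tau^*$. If instead we compare the optimized values $J^*(\tau)=\min_{b,p}J(\tau;b,p)$, then feasibility is not an issue, but we need Lipschitzness of a min-function. We would address this in two ways:
\begin{itemize}
\item Choose $\hat\tau$ as the nearest level on the side $\hat\tau\ge\tau^*$ (allowed since $|\tau^*-\hat\tau|\le\Delta$ still holds for grid spacing $\Delta$), which shrinks $L$ and keeps the same $(b,p)$ feasible.
\item Alternatively, note that a minimum of uniformly Lipschitz functions over a common feasible set is Lipschitz with the same constant.
\end{itemize}
Either route yields $|J(\tau^*)-J(\hat\tau)|\le C\Delta$.
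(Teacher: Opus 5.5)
Your Steps 1--3 are the paper's argument: the paper writes $J(\tau)=-Q_{sem}^2\Phi(\tau)+cL(\tau)$ with a coefficient $c$ aggregated from the physical execution, applies the triangle inequality with the two Lipschitz bounds, and takes $C=(Q_{sem}^{max})^2C_\Phi+c_{max}C_L$; your bound $c\le (V\omega_e+Z^{max})P_{max}D_{max}e^{K}/L_{max}$, obtained from C3 and $L(\tau)\ge L_{max}e^{-K}$, makes the paper's unspecified $c_{max}$ explicit. One caution on your feasibility remark: the ``minimum over a common feasible set'' route fails as stated, because C3 makes the feasible set of $(b,p)$ depend on $\tau$ through $L(\tau)$ (and an infeasible $\hat{\tau}$ is assigned $E_{total}^*=\infty$), so use your first route instead, rounding toward larger $\tau$ when such a level exists and closing with $J^*(\tau^*)\le J^*(\hat{\tau})\le J(\hat{\tau};b^*,p^*)\le J^*(\tau^*)+C\Delta$, or simply observe that the fixed-allocation inequality of Steps 1--3 never needs feasibility at $\hat{\tau}$.
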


\textbf{\textit{Proof}}: Define the per-slot objective component related to $\tau$: $J(\tau)=-Q_{sem}^2\Phi(\tau)+cL(\tau)$, where $c>0$ aggregates bandwidth-energy coefficients induced by the optimal physical execution at the patch size $L(\tau)$. By Lipschitz continuity, $|J(\tau^*) - J(\hat{\tau})|\leq (Q_{sem}^2C_{\Phi} + cC_L)|\tau^* - \hat{\tau}|\leq C\Delta$, where $C=(Q_{sem}^{max})^2C_{\Phi}+c_{max}C_L$ is a finite constant bounded by the system parameters.

\begin{theorem}
    Given any joint queue state $\Theta(t)$, let $\tau_t^*$ be the continuous optimal threshold for subproblem $P_2$, and $\hat{\tau}_t$ be its nearest discrete level induced by $\mathcal{M}(a_t)$ with maximum spacing $\Delta$. The instantaneous structural optimality gap satisfies: $J_t(\hat{\tau}_t) - J_t(\tau_t^*) \le C\Delta$ where $C > 0$ is the finite constant defined in \textbf{Lemma \ref{Per-Slot Structural Approximation Bound}}.
    
    \label{Global Optimality Gap of Structural Mapping}
\end{theorem}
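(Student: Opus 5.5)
The argument is essentially a one-sided version of Lemma \ref{Per-Slot Structural Approximation Bound}, applied slot by slot with the state frozen. Fix the joint queue state $\Theta(t)=[Q_{sem}(t),Z(t)]$, so the realized channel $h_t$ and the coefficient $V\omega_e+Z(t)$ in $\textbf{P2}$ are constants within the slot. With $\alpha_t=1$ (the update branch in which $\tau$ matters), define the reduced per-slot objective in the same way as in the proof of the Lemma:
\begin{equation*}
J_t(\tau)=-Q_{sem}(t)^2\Phi(\tau)+c_t L(\tau).
\end{equation*}
Here $c_t>0$ is the bandwidth--energy price per unit patch size obtained after the low-level execution layer optimizes $(b_t,p_t)$ in closed form subject to \textbf{C3}--\textbf{C4}.

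The first step is to check that this reduction is legitimate. The inner minimum over $(b,p)$ should scale linearly in $L$: the delay constraint fixes the required rate $R_t\ge L/D_{max}$, and both the energy term $pL/R$ and the cost $\omega_b b$ depend on the rate. I would argue that the optimal cost is (at most) linear in $L$ with a coefficient $c_t\le c_{max}$. The bound $c_{max}$ follows from the bounded support of $h_t$ and the truncation $Z(t)\le Z^{max}$, in parallel with the truncation $Q_{sem}\le Q_{sem}^{max}$ assumed in the preceding Theorem. This is the main obstacle. Strictly, the inner minimum is only Lipschitz in $L$, not linear, and I would only need that weaker property: its Lipschitz constant on $[0,L_{max}]$ is bounded under the feasible set of \textbf{C3}--\textbf{C4}. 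That constant then plays the role of $c_t$.

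Next, by the definition of $\hat{\tau}_t$ as the nearest level of $\mathcal{M}(a_t)$ with maximum spacing $\Delta$, we have $|\hat{\tau}_t-\tau_t^*|\le\Delta$. Invoking Lemma \ref{Per-Slot Structural Approximation Bound} directly gives $|J_t(\hat{\tau}_t)-J_t(\tau_t^*)|\le C\Delta$, where $C=(Q_{sem}^{max})^2C_\Phi+c_{max}C_L$. The constant is uniform in $t$ because it depends only on the truncation bounds and the system parameters.

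Finally, since $\tau_t^*$ is the continuous minimizer, $J_t(\hat{\tau}_t)-J_t(\tau_t^*)\ge 0$. The signed gap therefore equals the absolute gap, and it is at most $C\Delta$. Two remarks complete the argument. First, if the optimum has $\alpha_t^*=0$, then action $a_t=0$ reproduces it exactly and the gap is zero. Second, for the concrete forms $L(\tau)=L_{max}e^{-K\tau}$ and $\Phi(\tau)=1-\tau^\gamma$ with $\gamma\ge1$, the Lipschitz constants are $C_L=KL_{max}$ and $C_\Phi=\gamma$, so the Lemma's hypotheses hold. For $\gamma<1$, $\Phi$ is not Lipschitz near $0$; I would note that this requires $\gamma\ge1$, or else restricting $\tau$ away from $0$.
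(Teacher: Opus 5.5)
Your argument is the paper's own. It fixes $\Theta(t)$, lets the Execution Layer return the exact $(b_t,p_t)$ optimum, and controls the gap with the Lemma's Lipschitz bound and $|\hat{\tau}_t-\tau_t^*|\le\Delta$; your additions (the inner minimum being Lipschitz rather than linear in $L$, bounding $Z(t)$ so that $c_{max}$ is finite, nonnegativity of the signed gap, the $a_t=0$ case, and the $\gamma\ge 1$ caveat) only make explicit what the paper's two-line proof leaves implicit. One assumption that both you and the paper leave unstated is that $\hat{\tau}_t$ is itself feasible ($p_t^*\le P_{max}$): otherwise the Execution Layer assigns it infinite cost, and a Lipschitz bound that holds only on the feasible set does not cover it.
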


\textbf{\textit{Proof}}: Since the Execution Layer yields the exact physical resource optimum for any given state $\Theta(t)$, the per-slot objective deviation between $\tau_t^*$ and $\hat{\tau}_t$ is governed strictly by the Lipschitz continuity from \textbf{Lemma \ref{Per-Slot Structural Approximation Bound}}. Applying the maximum discretization boundary spacing $\vert{}\tau_t^* - \hat{\tau}_t\vert{} \le \Delta$ yields the bounded instantaneous optimality gap.

\subsection{The Execution Layer: Closed-form Physical Resource Allocation}

Once the discrete action $a_t$ maps to a deterministic patch size $L^{(a_t)}$, the Execution Layer, acting as the underlying physical solver in Fig. \ref{SystemModel_Framework}, focuses exclusively on optimizing continuous $b_t$ and $p_t$ to minimize the penalty function (10).

To minimize the energy penalty, the optimal transmission strategy must strictly meet, but not exceed, the delay constraint C3 (i.e., $D_t = D_{max}$), which yields the required data rate $R_{req}^{(a_t)} = L^{(a_t)} / D_{max}$. By inverting formula (1), the transmit power is expressed as a closed-form function of bandwidth:
\begin{equation}
    p_t(b_t) = \frac{N_0 b_t}{h_t} \left( 2^{R_{req}^{(a_t)}/{b_t}} - 1 \right)
\end{equation}

Based on Eq. (12) and (10), resource allocation can be reduced to a univariate optimization problem over $b_t \in (0, B_{max}]$:
\begin{equation}
    J_{phys}(b_t|a_t)=V\omega_b b_t+(V\omega_e+Z(t))(p_t(b_t)D_{max}+E_{other})
\end{equation}

Since $J_{phys}(b_t | a_t)$ is strictly convex with respect to $b_t$, $b_t^*$ can be obtained by finding the root of its first-order derivative ($\partial J_{phys} / \partial b_t = 0$) using \textbf{Bisection Search} Algorithm. The $p_t^*$ can be subsequently calculated via Eq. (12).

Crucially, To handle physical infeasibility ($p_t^* > P_{max}$), the Execution Layer enforces $E_{total}^* \to \infty$, which heavily penalizes the agent (Eq. 14) to prevent invalid decisions.

\subsection{MDP Transformation and Lyapunov-Guided Reward}
We reformulate the system as a Markov Decision Process (MDP) where the agent interacts with the environment:

\textit{State Space} ($\mathcal{S}$): $s_t = \{h_t, \lambda_t, Q_{sem}(t), Z(t)\}$, representing channel gains, semantic divergence, and semantic/energy deficits.

\textit{Action Space} ($\mathcal{A}$): $a_t \in \{0, 1, \dots, K\}$ (see Section VI-A).

\textit{Reward Function} ($r_t$): As explicitly depicted by the red feedback loop in Fig. \ref{SystemModel_Framework}, we define the reward as the negative of the Lyapunov penalty function to satisfy long-term constraints:
\begin{equation}
    r_t(s_t,a_t)=Q_{sem}(t)\mu_t^{a_t}-(V\omega_e+Z(t))E_{total}^*(a_t)-V\omega_b b_t^*(a_t)
\end{equation}

\subsection{The Decision Layer: PPO}

At the Decision Layer, we deploy an Actor-Critic agent to maximize the expected cumulative discounted reward $J(\theta) = \mathbb{E}[\sum \gamma_{drl}^t r_t]$. To mitigate training instability caused by the non-stationary channel and drift dynamics in mobile edge networks, we adopt the clipped surrogate objective of PPO:
\begin{equation}
    L^{CLIP}(\theta) = \mathbb{E}_t [ \min( \rho_t(\theta) \hat{A}_t, \text{clip}(\rho_t(\theta), 1-\epsilon, 1+\epsilon) \hat{A}_t ) ]
\end{equation}

where $\rho_t(\theta)$ is the probability ratio between the updated and old policies. Actor and Critic networks are updated iteratively using mini-batches from the replay buffer, employing Generalized Advantage Estimation (GAE) $\hat{A}_t$ to reduce variance.

Crucially, the architectural novelty of LYRA lies in this hierarchical MDP reformulation. The deployed PPO agent leverages the historical filtration $\mathcal{F}_{t-1}$ to optimize  the long-horizon discounted cumulative reward, where the Lyapunov-derived penalty reward dynamically enforces long-term constraint compliance across continuous operational trajectories. 

\subsection{Algorithm Design}

Integrating the hierarchical layers discussed above, the complete workflow of the proposed LYRA framework is summarized in \textbf{Alg.} 1. By decoupling the continuous physical resource optimization from the discrete MDP, the framework ensures compliance with system physical constraints while maintaining stable policy evolution.

\begin{algorithm}[htbp]
	\caption{Lyapunov-Guided Hierarchical Joint Update Scheduling and Resource Allocation Framework (LYRA).}
	\label{Online_alg}
	\begin{algorithmic}[1]
		\REQUIRE ~~\\
		$V$, $D_{max}$, $P_{avg}^{budget}$, $\mathcal{A}$ and $\epsilon$.
		\ENSURE ~~\\
		$a_t$, $b_t^*$ and $p_t^*$.
		\STATE Initialize: Actor network $\pi_\theta$ and Critic network $V_\phi$. Set $Q_{sem}(0) = 0$ and $Z(0) = 0$. Empty replay buffer $\mathcal{D}$.
		\FOR {$t = 0, 1, \dots, T-1$}
			\STATE Observe OSDR $\lambda_t$ and channel gain $h_t$.
			\STATE Construct state $s_t = \{h_t, \lambda_t, Q_{sem}(t), Z(t)\}$.
            \STATE Sample discrete action $a_t \sim \pi_\theta(\cdot | s_t)$ from $\mathcal{A}$.
		      \STATE Map $a_t$ to target patch size $L^{(a_t)}$ via Eq. (11).
            \IF {$a_t == 0$}
                \STATE $b_t^* = 0, p_t^* = 0$.
            \ELSE
                \STATE Define convex physical penalty $J_{phys}(b_t | a_t)$ (Eq. 13).
                \STATE Solve $b_t^*$ via \textbf{Bisection Search} on $\partial J_{phys} / \partial b_t = 0$ within (0, $B_{max}$].
                \STATE Calculate $p_t^*$ via Eq. 12; 
                \IF {$p_t^* > P_{max}$}
                    \STATE Mask the action by setting $E_{total}^* = \infty$;
                \ELSE 
                    \STATE Calculate $E_{total}^*$.
                \ENDIF
		      \ENDIF 
            \STATE Compute instantaneous reward $r_t$ using Eq. (14).
            \STATE Update $Q_{sem}(t+1)$ and $Z(t+1)$ using Eq. (2) and (7).
            \STATE Store tuple $(s_t, a_t, r_t, s_{t+1})$ into buffer $\mathcal{D}$.
            \IF {buffer $\mathcal{D}$ is full}
                \STATE Compute GAE $\hat{A}_t$ and optimize Actor and Critic network parameters $\{\theta, \phi\}$ via \textbf{PPO} gradient ascent with the clipped objective (Eq. 15).
                \STATE Clear buffer $\mathcal{D}$.
            \ENDIF
		\ENDFOR
	\end{algorithmic}
\end{algorithm}

\subsection{Complexity Analysis}

For resource-constrained UAV edge nodes, the online scheduling overhead must remain computationally lightweight to support real-time deployment. Therefore, we analyze the per-slot execution complexity of LYRA. Its online execution mainly consists of two sequential components: 1) forward propagation in the PPO Actor network; 2) one-dimensional convex optimization in the Execution Layer. 

For the Actor network with $M$ fully connected layers, the dominant inference complexity arises from matrix multiplications between adjacent layers, yielding: $O(\sum_{m=1}^{M-1} N_m N_{m+1})$, where $N_m$ denotes the number of neurons in the $m$-th layer. Furthermore, in the Execution Layer, if an update action is triggered, the algorithm executes a \textbf{Bisection Search} over the 1D convex bounded interval $(0, B_{max}]$. To converge to the optimal bandwidth $b_t^*$ within an error tolerance $\hat{\epsilon}$, the required number of iterations is $\lceil \log_2(B_{max}/\hat{\epsilon}) \rceil$, where $B_{max}$ is the maximum search interval and $\hat{\epsilon}$ is the convergence tolerance. 

Accordingly, the per-slot execution complexity of LYRA is: $\mathcal{O}(\sum_{m=1}^{M-1} N_m N_{m+1} + \log_2(B_{max}/\hat{\epsilon}))$, which is computationally lightweight for real-time UAV edge deployment.

\section{Experiments}
\subsection{Simulation Setup}

To model A2G channel fluctuations, UAV mobility is driven by high-mobility 2D traces from the well-known and widely utilized CRAWDAD archive \cite{r43} (536 taxis generating 11M+ GPS records over 25 days in the San Francisco Bay Area), which are constrained to urban road topologies and projected into a 3D flight space at a fixed cruising altitude. A2G channel operates with a path loss exponent $\beta=2.2$, noise density $N_0=-169\text{ dBm/Hz}$, bandwidth $B_{\max}=2\text{ MHz}$, and transmit power $P_{\max}=0.5\text{ W}$ \cite{r41}. To ensure responsiveness and sustainability, let $D_{\max}=2\text{ s}$ and $P_{\text{avg}}^{\text{budget}}=0.5\text{ J}$ \cite{r35}, accommodating probe overhead and oracle latency. We employ ResNet-18 as a representative benchmark for block-structured hierarchical DNNs, mapping its residual topology to a discrete action space where $a_t=0$ indicates the idle state and $a_t>0$ specifies the structural synchronization depth. Since LYRA relies on block-level modularity, this setup is architecture-agnostic and generalizes to broader hierarchical DNNs. All model maintenance and policy evaluations run on an NVIDIA RTX 4060 GPU. Parameter $V$ is configured to $10$ to satisfy the $O(1/V)$ trade-off with weights $\omega_b=\omega_e=1.0$, and all results are averaged over 20 random seeds to eliminate stochastic bias.

We adopt CIFAR-10-C, a standardized and severity-controllable benchmark spanning 15 corruption types (e.g., fog, frost, snow) across severities 1–5, for reproducible evaluation under non-stationary semantic drift. Since LYRA operates on divergence rates rather than raw pixel statistics, its scheduling logic is expected to generalize to complex real-world degradations. By sequentially concatenating varying corruption types, we emulate UAV flight dynamics: gradual drift (progressive fog), abrupt drift (sudden contrast shifts), and recurring drift (periodic shadowing). This data stream induces a time-varying environmental risk rate $\lambda_t$ to stress-test triggering precision and long-term constraint satisfaction. Since oracle is periodically refreshed offline, its deviation $\epsilon_t$ remains bounded; \textit{Theorem} and \textit{Proposition} \ref{Asymptotic Optimality of OSDR-Driven Policy} limit resulting performance gap w.r.t. the idealized label-aware policy.

\subsection{Performance Metrics and Baselines}

To quantify operational efficiency and scheduling intelligence, we track three standard metrics, i.e., time-averaged System Total Cost, average risk backlog $Q_{sem}$, and time-averaged patch size (MB/slot), alongside two specialized metrics: (1) \textbf{Semantic Recovery Efficiency (SRE):} Measuring the semantic fidelity improvement achieved per unit of communication overhead, SRE is defined as $\text{SRE} = \frac{1}{|\mathcal{T}^{+}|} \sum_{t \in \mathcal{T}^{+}}\frac{\Phi^{(a_t)}}{L^{(a_t)}}$, where $\mathcal{T}^{+}$ denotes the set of time slots where updates are activated. A higher SRE indicates a more cost-effective selection of synchronization depth. (2) \textbf{Update Triggering Precision (UTP):} Assessing algorithm's ability to identify semantically critical moments, UTP is defined as the ratio of the average risk backlog at update-triggered slots to that at idle slots. A higher UTP signifies that the algorithm precisely captures environmental corruptions and avoids redundant transmissions. We compare LYRA against the following baselines: (i) \textit{\textbf{Periodic Update (PU) \cite{r5}:}} A content-agnostic policy synchronizing at fixed intervals, equivalent to a deterministic AoI trigger insensitive to actual corruptions; (ii) \textit{\textbf{Mixed-Action DDPG (MA-DDPG) \cite{r38}:}} It shares LYRA's network but optimizes within the original hybrid action space to isolate the structural necessity of our hierarchical decoupling. (iii) \textit{\textbf{Heuristic-Reward PPO (HR-PPO) \cite{r40}:}} It shares LYRA's architecture but excludes the Lyapunov term from its reward, isolating the effect of constraint-aware guidance; (iv) \textit{\textbf{Myopic Update (MU) \cite{r42}:}} A greedy per-slot Lyapunov drift-plus-penalty minimizer lacking LYRA's DRL exploration, reacting only to instantaneous queue states; (v) \textit{\textbf{No Update (NU):}} A static policy with no synchronization, serving as a lower bound;  (vi) \textit{\textbf{OSDR-Threshold (OT):}} A static policy triggering model updates whenever the semantic drift rate exceeds a fixed threshold, requiring no learning; (vii) \textit{\textbf{Tuned OSDR Threshold (TOT):}} OT with its threshold tuned offline via search to minimize system total cost. Notably, TOT is used for in-distribution evaluation (Sec. VII.C.3); OT is used for OOD tests (Sec. VII.C.4–5) to avoid leaking unseen drift information via re-tuning.

Furthermore, to conduct a structural ablation on the SASS mechanism, we introduce four synchronization variants operating under identical update triggers: \textit{\textbf{Back-to-Front (B2F)}} (updating parameters backwards from semantic heads), \textit{\textbf{Head Only (HO)}} (updating only the final classifier), \textit{\textbf{Full-Model (FM)}}, and \textit{\textbf{Random Block (RB)}} (updating a randomly selected neural block).

\subsection{Experimental Results and Discussion}

\begin{figure}[tbp]
    \centering
    \subfloat[Reward vs MA-DDPG.]{\includegraphics[width=0.333\columnwidth]{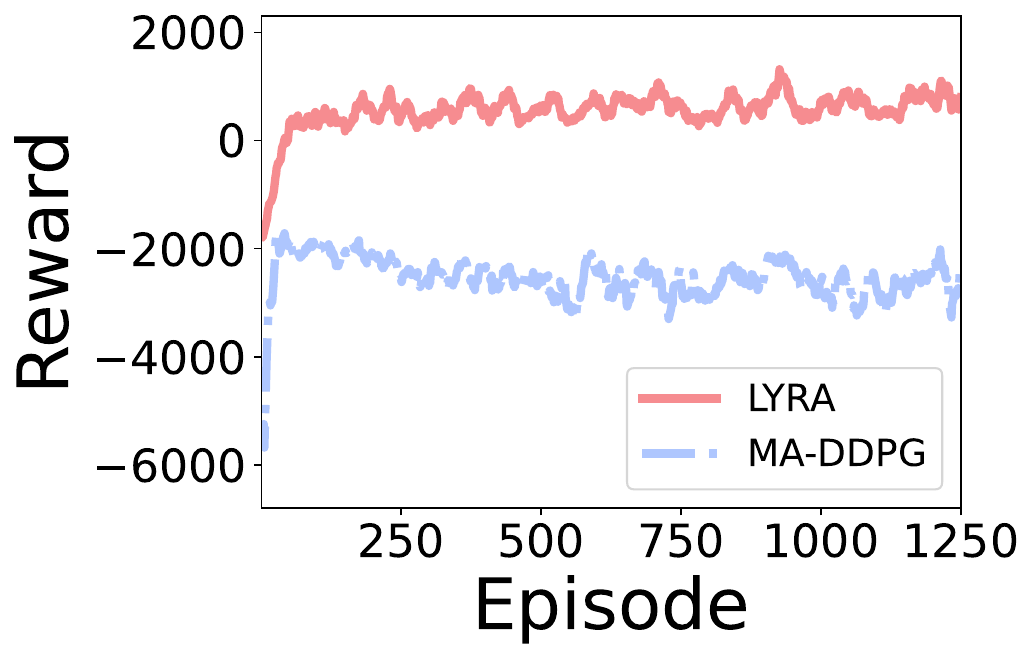}}
    \hfil 
    \subfloat[Reward vs HR-PPO.]{\includegraphics[width=0.333\columnwidth]{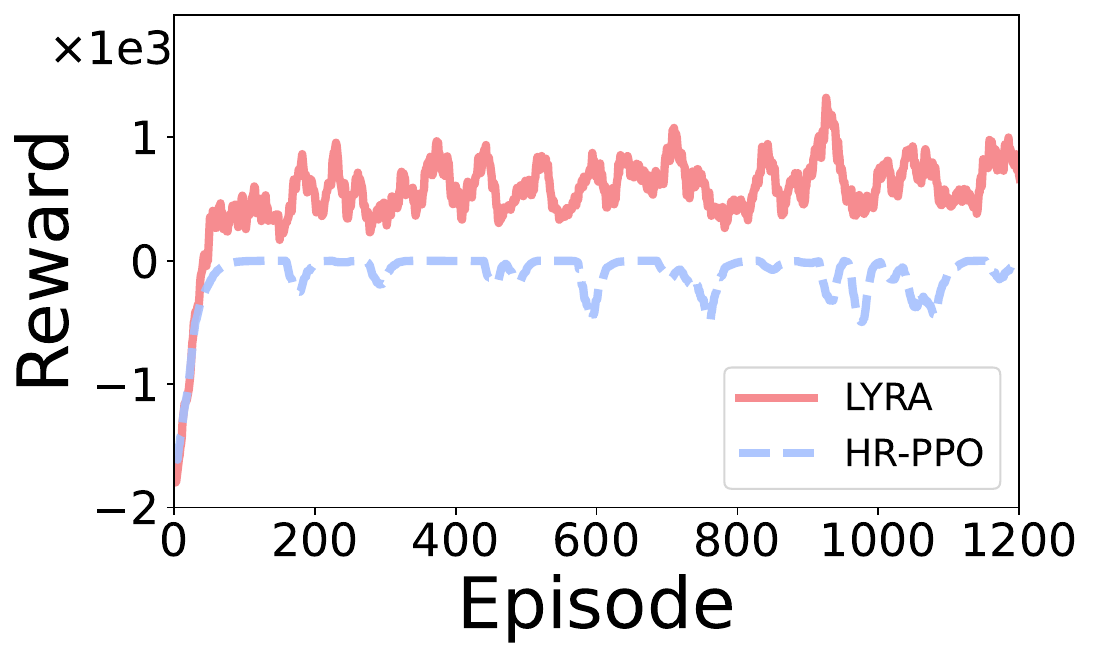}}
    \hfil 
    \subfloat[Energy backlog $Z(t)$.]{\includegraphics[width=0.333\columnwidth]{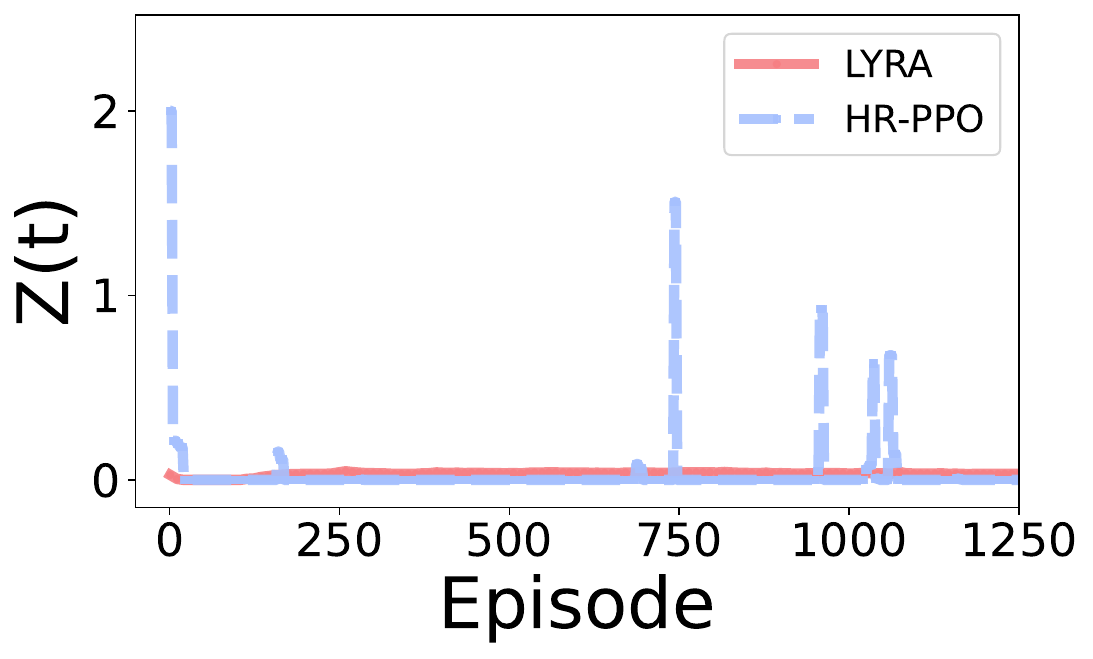}}
    
    \caption{Impact of (a) hierarchical action decoupling against monolithic exploration and (b) Lyapunov guidance on reward convergence, alongside (c) energy queue backlog dynamics.}
    \label{Combined_Ablations}
\end{figure}

\subsubsection{Algorithmic Validation and Constraint Compliance}
To evaluate the effectiveness of LYRA core algorithmic designs, action space dimensionality reduction and Lyapunov-guided reward, we compare its training stability and energy compliance against MA-DDPG and HR-PPO.

As shown in Fig. \ref{Combined_Ablations}(a), LYRA achieves rapid and stable convergence, reaching a high positive reward plateau within the first 100 episodes. Conversely, MA-DDPG fails to converge due to severe gradient conflicts in unreduced hybrid space, proving that conventional monolithic end-to-end exploration struggles under complex MINLP, whereas LYRA's superiority stems from its hierarchical architecture. Furthermore, Fig. \ref{Combined_Ablations}(b) demonstrates LYRA superior reward convergence over HR-PPO, which suffers from significant instability. Crucially, Fig. \ref{Combined_Ablations}(c) reveals the underlying mechanism: LYRA's $Z(t)$ remains bounded, whereas HR-PPO exhibits multiple sharp spikes, indicating severe energy budget violations. This confirms that integrating Lyapunov drift-plus-penalty is imperative for translating long-term hardware constraints into realizable policies.

\begin{table}[t]
    \centering
    \caption{Ablation Study on SASS Mechanism (mean $\pm$ std)}
    \label{SASS_Ablation}
    \begin{tabular}{cccc} 
        \toprule
        \textbf{Algorithm} & $\bm{Q_{sem}}$ & \textbf{Patch Size (MB/slot)} & \textbf{SRE (1/MB)} \\
        \midrule
        LYRA & $2.516 \pm 0.305$ & $0.0459 \pm 0.0040$ & $1.419 \pm 0.217$ \\
        B2F  & $3.098 \pm 0.194$ & $0.2351 \pm 0.1185$ & $0.157 \pm 0.061$ \\
        HO   & $5.641 \pm 0.311$ & $0.0016 \pm 0.0001$ & $21.800 \pm 0.000$ \\
        FM   & $2.936 \pm 0.068$ & $0.1615 \pm 0.0066$ & $0.200 \pm 0.000$ \\
        RB   & $2.812 \pm 0.147$ & $0.0993 \pm 0.0181$ & $1.439 \pm 0.248$ \\
        \bottomrule
    \end{tabular}

    \vspace{4pt} 
    \raggedright 
    \footnotesize 
    \textit{Note:} HO/FM select a fixed block regardless of trigger timing, so their SRE is deterministic (std $\approx$ 0).
\end{table}

\subsubsection{Ablation Study on SASS Mechanism}
To empirically validate the physical necessity of front-to-back synchronization under low-level corruptions, we conduct a structural ablation using fixed scheduling triggers. As seen in Table \ref{SASS_Ablation}, replacing LYRA front-to-back mechanism with B2F updates leads to a significant performance degradation. Notably, SASS reduces the average per-slot communication overhead by 80.5\% compared to B2F (0.0459 MB vs. 0.2351 MB), while achieving a lower risk backlog (2.516 vs. 3.098). This significant inefficiency proves that deep-layer refinements cannot computationally recover the spatial features discarded by corrupted shallow representations. Furthermore, while HO exhibits an inflated SRE due to its negligible patch size, it completely fails to suppress the risk backlog (5.641). By dynamically scaling the front-to-back depth, LYRA achieves the lowest risk backlog while incurring a substantially smaller footprint than all non-degenerate baselines (B2F, FM, RB), best capturing the hierarchical vulnerability of vision models.

\begin{figure*}[tbp]
    \centering
    \subfloat[Recovered Accuracy Dynamics.]{\includegraphics[width=0.5\columnwidth]{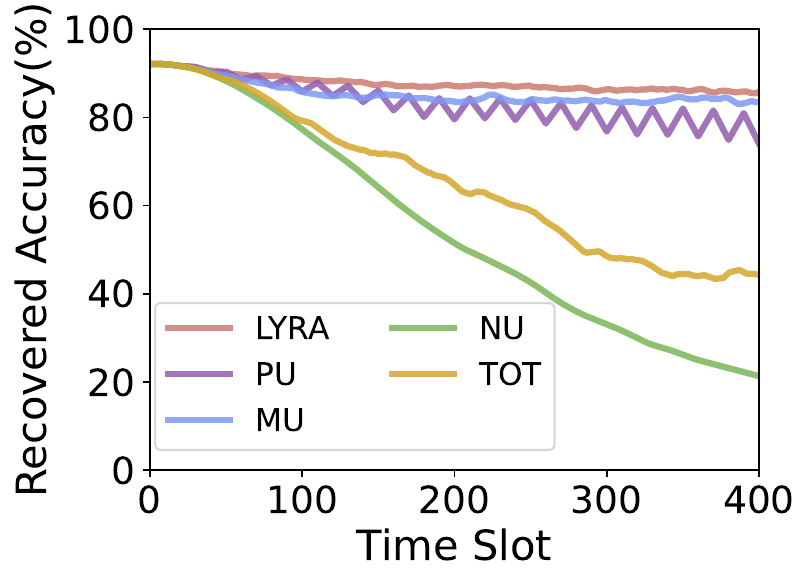}}
    \hfil 
    \subfloat[Semantic Recovery Efficiency.]{\includegraphics[width=0.5\columnwidth]{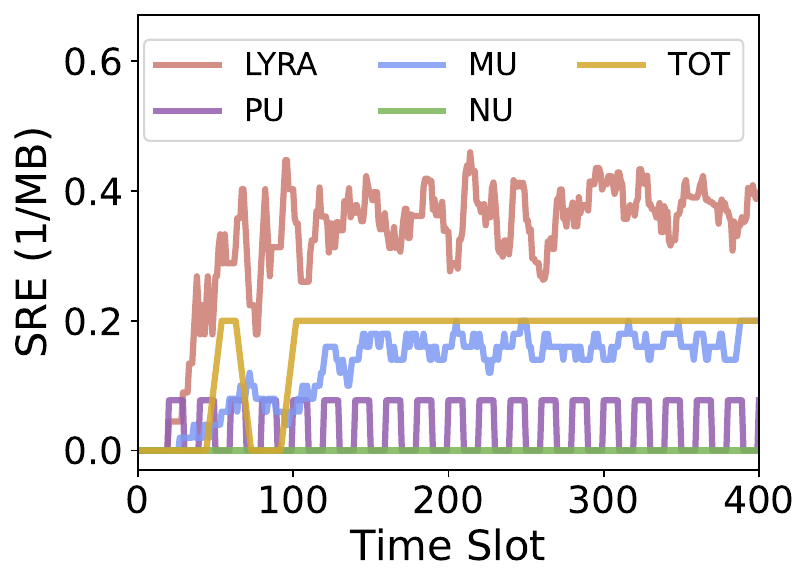}}
    \hfil 
    \subfloat[Update Trigger Precision.]{\includegraphics[width=0.5\columnwidth]{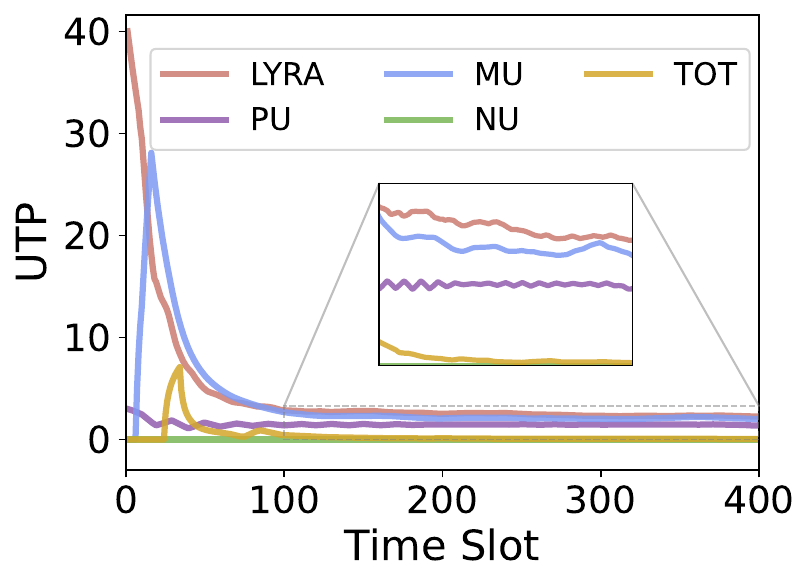}}
    \hfil 
    \subfloat[Cost-Divergence Trade-off.]{\includegraphics[width=0.5\columnwidth]{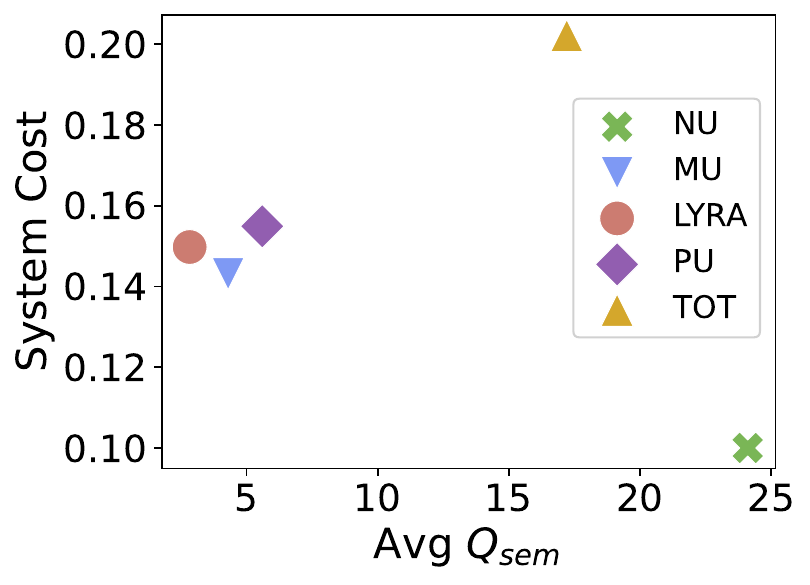}}
    
    \caption{System performance comparison under dynamic environmental corruptions.}
    \label{System_Performance}
\end{figure*}

\subsubsection{System-Level Performance and Trade-off Analysis}
To evaluate the overall performance of LYRA, we compared it against algorithms with different update scheduling strategies.

Fig. \ref{System_Performance}(a) reports the recovered accuracy. LYRA sustains the highest accuracy, closely followed by MU; PU's periodic triggering yields a sawtooth pattern, TOT stabilizes lower, and NU decays to 21\%, mirroring its unbounded backlog growth, confirming that LYRA's backlog control translates into higher recovered accuracy. Fig. \ref{System_Performance}(b) evaluates recovery efficiency: LYRA dynamically modulates patch sizes based on real-time stress, achieving superior SRE, while PU/MU's fixed strategies and TOT's thresholding yield lower per-MB recovery, confirming LYRA's SASS maximizes semantic recovery per MB.  
Fig. \ref{System_Performance}(c) assesses triggering precision. PU wastes bandwidth (UTP $\approx$ 1). Although TOT leverages OSDR, its myopic triggering significantly degrades precision. In contrast, LYRA consistently maintains the highest UTP, accurately targeting semantically critical moments and validating OSDR as a proxy. Ultimately, Fig. \ref{System_Performance}(d) presents the Cost-Divergence trade-off. While NU minimizes cost at the expense of a severe risk backlog, and MU incurs the higher cost for sub-optimal $Q_{sem}$ reduction, TOT reduces risk backlog but achieves highest cost due its lack of state-aware depth control. Conversely, LYRA achieves the most favorable, reducing the average risk backlog by up to 33.3\% over the best baseline at MU-comparable costs. This validates the significance of Lyapunov-guided DRL for forward-looking resource allocation.

\begin{figure}[tbp]
    \centering
    \subfloat[risk backlog.]{\includegraphics[width=0.5\columnwidth]{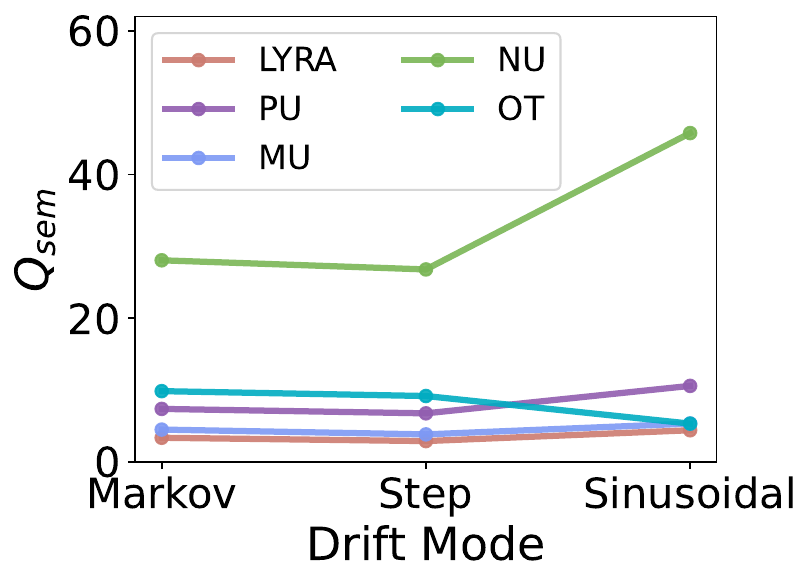}}
    \hfil 
    \subfloat[UTP.]{\includegraphics[width=0.5\columnwidth]{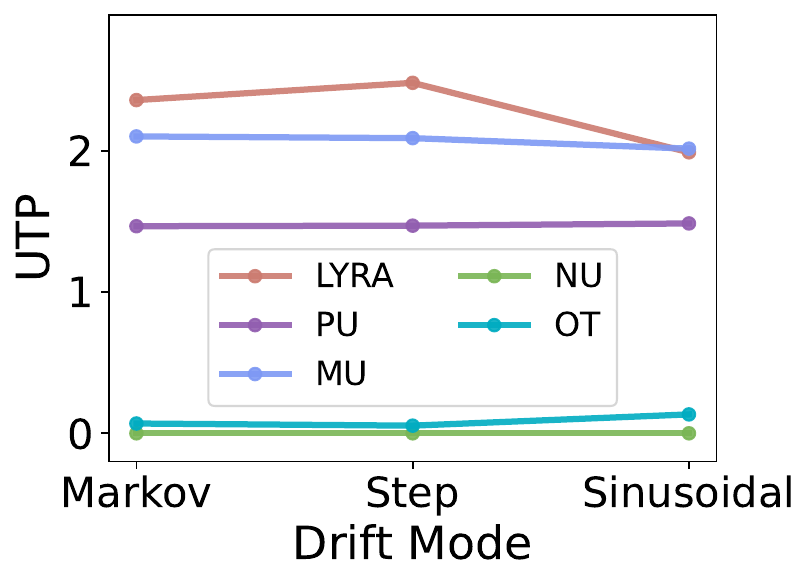}}
    
    \caption{Cross-distribution generalizability comparison under different drift temporal structures.}
    \label{Sim_drift_mode}
\end{figure}

\subsubsection{Generalization Across Unseen Temporal Drift Patterns}
To evaluate zero-shot generalizability across unseen temporal drift structures, we test LYRA under two out-of-distribution (OOD) scenarios: Step (abrupt corruption) and Sinusoidal (periodic fluctuations) modes. In Fig. \ref{Sim_drift_mode}(a), LYRA consistently minimizes the risk backlog, whereas NU suffers a backlog explosion in Sinusoidal mode. Fig. \ref{Sim_drift_mode}(b) shows LYRA achieves peak UTP in Step mode, as the abrupt semantic surge causes a high $\lambda_t$, indicating an immediate, high-value update. In contrast, content-agnostic PU ignores these temporal variations, and the fixed threshold of OT fails under OOD fluctuations. As seen, incorporating real-time divergence $\lambda_t$ into state space is significant for LYRA to avoid overfitting and achieve robust scheduling intelligence across unseen temporal drift dynamics. 

\begin{figure}[tbp]
    \centering
    \subfloat[risk backlog.]{\includegraphics[width=0.333\columnwidth]{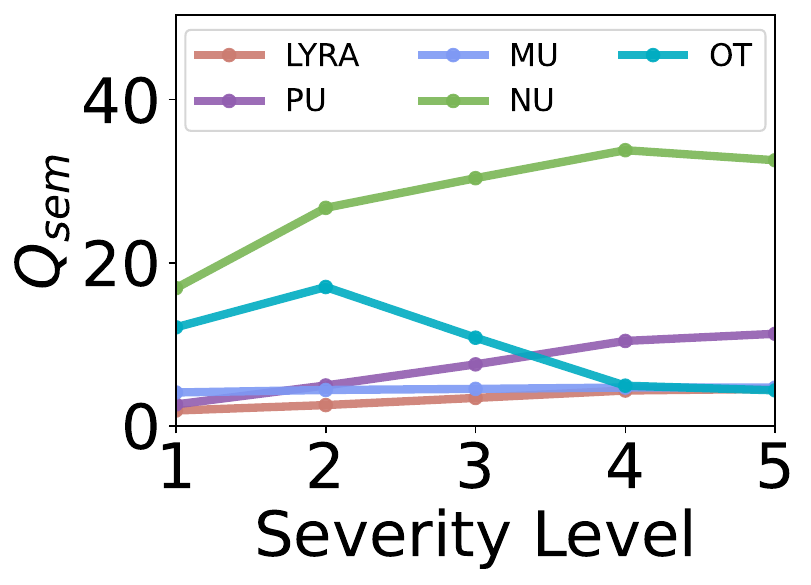}}
    \hfil 
    \subfloat[Patch Size.]{\includegraphics[width=0.333\columnwidth]{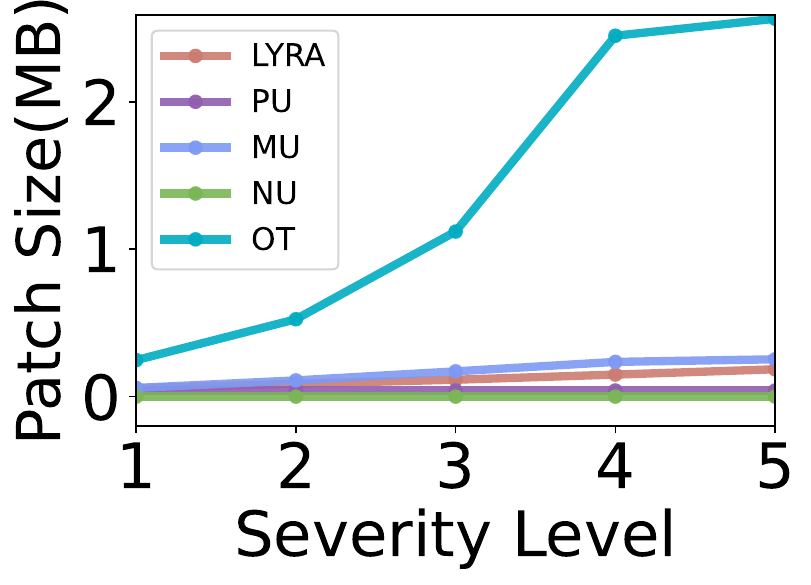}}
    \hfil 
    \subfloat[UTP.]{\includegraphics[width=0.333\columnwidth]{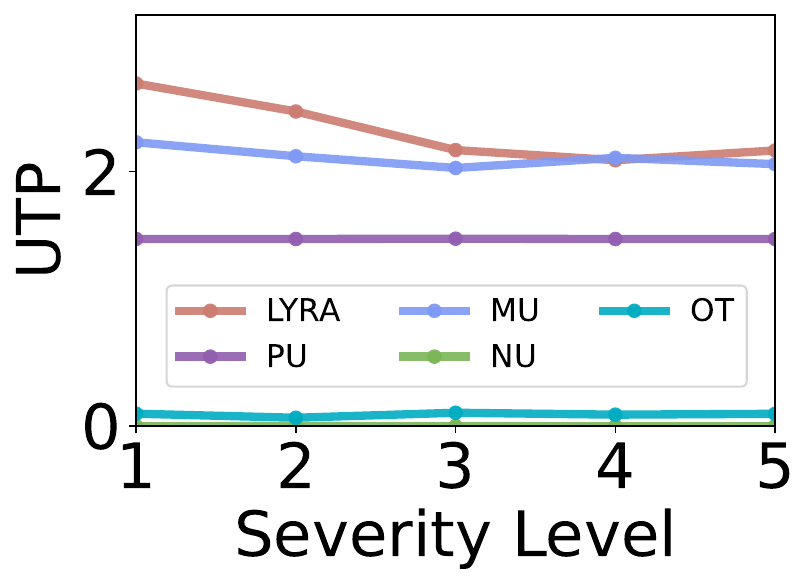}}
    
    \caption{Performance stability comparison under varying drift severity levels.}
    \label{Sim_severity}
\end{figure}

\subsubsection{Performance Stability under Varying Corruption Severities}
This experiment conducts a stress test to evaluate system's scalability and semantic resilience under escalating environmental corruptions. In Fig. \ref{Sim_severity}(a), LYRA consistently maintains the lowest $Q_{sem}$. While OT partially suppresses divergence, its rigid threshold accumulates significant backlog before triggering. Fig. \ref{Sim_severity}(b) demonstrates LYRA’s adaptive depth control: under extreme stress, it autonomously lowers $\tau_t$ for deeper synchronization, preventing cascading error propagation. Conversely, OT indiscriminately triggers updates upon exceeding its static boundary, causing explosive patch size growth. Furthermore, LYRA sustains peak UTP presented in Fig. \ref{Sim_severity}(c), whereas PU stagnates near 1 and OT exhibits poor precision. And, OT's pronounced failure under severe conditions proves that static heuristics cannot substitute state-aware, adaptive scheduling enabled by Lyapunov-guided DRL.

\section{Conclusive Remarks}
This paper originally investigates the joint optimization of model update scheduling and resource allocation for UAV edge intelligence systems under low-level environmental corruption. To ensure the semantic fidelity and resource efficiency of UAVs without relying on expert labels, we propose LYRA, a Lyapunov-guided hierarchical joint update scheduling and resource allocation framework. Specifically, we design SASS to enable bandwidth-efficient and fine-grained structural updates, and formulate OSDR as a label-free proxy for real-time model performance. Furthermore, we explore a DRL that improves decision-making efficiency through action space reduction and satisfies the long-term time-averaged energy constraints. Extensive results demonstrate that LYRA outperforms representative baselines, by achieving superior semantic recovery efficiency and triggering precision, while maintaining resource efficiency. 
As LYRA separates semantic synchronization from specific model architectures, the proposed framework is inherently extensible to a broad range of vision models and edge perception tasks. The obtained results are stimulating our ongoing work to further validate this generality on larger vision models and more diverse UAV perception benchmarks.

\bibliographystyle{IEEEtran}
\bibliography{ref}

\begin{thebibliography}{10}
\providecommand{\url}[1]{#1}
\csname url@samestyle\endcsname
\providecommand{\newblock}{\relax}
\providecommand{\bibinfo}[2]{#2}
\providecommand{\BIBentrySTDinterwordspacing}{\spaceskip=0pt\relax}
\providecommand{\BIBentryALTinterwordstretchfactor}{4}
\providecommand{\BIBentryALTinterwordspacing}{\spaceskip=\fontdimen2\font plus
\BIBentryALTinterwordstretchfactor\fontdimen3\font minus \fontdimen4\font\relax}
\providecommand{\BIBforeignlanguage}[2]{{%
\expandafter\ifx\csname l@#1\endcsname\relax
\typeout{** WARNING: IEEEtran.bst: No hyphenation pattern has been}%
\typeout{** loaded for the language `#1'. Using the pattern for}%
\typeout{** the default language instead.}%
\else
\language=\csname l@#1\endcsname
\fi
#2}}
\providecommand{\BIBdecl}{\relax}
\BIBdecl

\bibitem{r1}
Q.~He, Z.~Dong, F.~Chen \emph{et~al.}, ``Pyramid: Enabling hierarchical neural networks with edge computing,'' in \emph{ACM WWW'22}, New York, NY, USA, 2022, p. 1860–1870.

\bibitem{r2}
H.~Zheng, M.~Gao, Z.~Chen \emph{et~al.}, ``A distributed hierarchical deep computation model for federated learning in edge computing,'' \emph{IEEE Transactions on Industrial Informatics}, vol.~17, no.~12, pp. 7946--7956, 2021.

\bibitem{r3}
A.~Sharshar, L.~U. Khan, W.~Ullah \emph{et~al.}, ``Vision-language models for edge networks: A comprehensive survey,'' \emph{IEEE Internet Things Journal}, vol.~12, no.~16, pp. 32\,701--32\,724, 2025.

\bibitem{r4}
X.~Ren, Q.~Li, H.~Du \emph{et~al.}, ``Tri-ring: Asynchronous service provisioning with online learning in edge cloud networks,'' in \emph{IEEE INFOCOM}, 2025, pp. 1--10.

\bibitem{r5}
X.~Ai and W.~Liang, ``Freshness-aware inference services in edge computing via offloading or local processing,'' in \emph{IEEE LCN}.\hskip 1em plus 0.5em minus 0.4em\relax IEEE, 2025, pp. 1--10.

\bibitem{r6}
K.~Luo, K.~Zhao, T.~Ouyang \emph{et~al.}, ``Efficient coordination of federated learning and inference offloading at the edge: A proactive optimization paradigm,'' \emph{IEEE Transactions Mobile Computing}, vol.~24, no.~1, pp. 407--421, 2025.

\bibitem{r7}
Y.~Zeng, R.~Zhou, L.~Jiao \emph{et~al.}, ``Efficient online dnn inference with continuous learning in edge computing,'' in \emph{IEEE/ACM IWQoS}, 2024, pp. 1--10.

\bibitem{r8}
X.~Ai, W.~Liang, and C.~Liu, ``Joint optimization of model retraining and inference services in dt-assisted edge computing,'' \emph{IEEE/ACM Transactions Networking}, vol.~34, pp. 1804--1819, 2025.

\bibitem{r9}
J.~Li, S.~Guo, W.~Liang \emph{et~al.}, ``Digital twin-enabled service provisioning in edge computing via continual learning,'' \emph{IEEE Transactions Mobile Computing}, vol.~23, no.~6, pp. 7335--7350, 2024.

\bibitem{r10}
Y.~Kong, P.~Yang, and Y.~Cheng, ``Adaptive on-device model update for responsive video analytics in adverse environments,'' \emph{IEEE Transactions on Circuits and Systems for Video Technology}, vol.~35, no.~1, pp. 857--873, 2025.

\bibitem{r11}
H.~Liu, S.~Liu, S.~Long \emph{et~al.}, ``Joint optimization of model deployment for freshness-sensitive task assignment in edge intelligence,'' in \emph{IEEE INFOCOM}, 2024, pp. 1751--1760.

\bibitem{r12}
\BIBentryALTinterwordspacing
P.~Han, S.~Wang, Y.~Jiao \emph{et~al.}, ``Federated learning while providing model as a service: Joint training and inference optimization,'' 2023. [Online]. Available: \url{https://arxiv.org/abs/2312.12863}
\BIBentrySTDinterwordspacing

\bibitem{r13}
J.~Li, J.~Wang, W.~Liang \emph{et~al.}, ``Inference service fidelity maximization in dt-assisted edge computing,'' \emph{IEEE Transactions Mobile Computing}, 2025.

\bibitem{r14}
Y.~Zhang, W.~Liang, Z.~Xu \emph{et~al.}, ``Aoi-aware inference services in edge computing via digital twin network slicing,'' \emph{IEEE Transactions Services Computing}, vol.~17, no.~6, pp. 3154--3170, 2024.

\bibitem{r15}
X.~Ai, W.~Liang, Y.~Zhang \emph{et~al.}, ``Fidelity-aware inference services in dt-assisted edge computing via service model retraining,'' \emph{IEEE Transactions Services Computing}, 2025.

\bibitem{r16}
\BIBentryALTinterwordspacing
H.~Cai, Z.~Zhou, and Q.~Huang, ``Online resource allocation for edge intelligence with colocated model retraining and inference,'' 2024. [Online]. Available: \url{https://arxiv.org/abs/2405.16029}
\BIBentrySTDinterwordspacing

\bibitem{r17}
T.~Li, S.~Leng, X.~Liao \emph{et~al.}, ``Digital twin-based task-driven resource management in intelligent uav swarms,'' \emph{IEEE Transactions on Intelligent Transportation Systems}, vol.~26, no.~4, pp. 5467--5480, 2025.

\bibitem{r18}
W.~Jiang, H.~Han, D.~Feng \emph{et~al.}, ``Energy-efficient and accuracy-aware dnn inference with iot device-edge collaboration,'' \emph{IEEE Transactions on Services Computing}, 2025.

\bibitem{r19}
A.~Mahadevan and M.~Mathioudakis, ``Cost-effective retraining of machine learning models,'' \emph{arXiv preprint arXiv:2310.04216}, 2023.

\bibitem{r20}
T.~Zhang, Z.~Chen, A.~Liu \emph{et~al.}, ``Age of information under periodic updating: Time-dependent statistical characteristics,'' \emph{IEEE Transactions on Communications}, 2026.

\bibitem{r21}
M.~Hatami, M.~Leinonen, and M.~Codreanu, ``Aoi minimization in status update control with energy harvesting sensors,'' \emph{IEEE Transactions on Communications}, vol.~69, no.~12, pp. 8335--8351, 2021.

\bibitem{r22}
G.~Zhang, C.~Shen, Q.~Shi \emph{et~al.}, ``Aoi minimization for wsn data collection with periodic updating scheme,'' \emph{IEEE Transactions on Wireless Communications}, vol.~22, no.~1, pp. 32--46, 2022.

\bibitem{r23}
J.~Qi, C.~Liu, C.~Xu \emph{et~al.}, ``Efficient information updates in compute-first networking via reinforcement learning with joint aoi and voi,'' \emph{IEEE Internet of Things Journal}, 2026.

\bibitem{r24}
H.~Lee, S.~Yoo, D.~Lee \emph{et~al.}, ``How important is periodic model update in recommender system?'' in \emph{Proceedings of the 46th International ACM SIGIR Conference on Research and Development in Information Retrieval}, 2023, pp. 2661--2668.

\bibitem{r25}
S.~Yao, N.~Rashvand, A.~D. Pazho \emph{et~al.}, ``From offline to periodic adaptation for pose-based shoplifting detection in real-world retail security,'' \emph{IEEE Internet of Things Journal}, 2026.

\bibitem{r26}
Q.~Zhang, R.~Han, C.~H. Liu \emph{et~al.}, ``Edgeta: Neuron-grained scaling of foundation models in edge-side retraining,'' \emph{IEEE Transactions on Mobile Computing}, vol.~24, no.~4, pp. 2690--2707, 2024.

\bibitem{r27}
J.~Liu, R.~Li, H.~Xu \emph{et~al.}, ``Fedquad: Adaptive layer-wise lora deployment and activation quantization for federated fine-tuning,'' \emph{IEEE Transactions on Mobile Computing}, 2025.

\bibitem{r28}
L.~Xu, J.~Jiao, T.~Yang \emph{et~al.}, ``Semantic utility loss of information for energy efficient semantic status update communications,'' \emph{IEEE Transactions on Cognitive Communications and Networking}, vol.~11, no.~1, pp. 59--74, 2024.

\bibitem{r29}
K.~Huang, Q.~Lan, Z.~Liu \emph{et~al.}, ``Semantic data sourcing for 6g edge intelligence,'' \emph{IEEE Communications Magazine}, vol.~61, no.~12, pp. 70--76, 2023.

\bibitem{r30}
Q.~Hu, G.~Zhang, Z.~Qin \emph{et~al.}, ``Robust semantic communications with masked vq-vae enabled codebook,'' \emph{IEEE Transactions on Wireless Communications}, vol.~22, no.~12, pp. 8707--8722, 2023.

\bibitem{r31}
L.~Xia, Y.~Sun, X.~Li \emph{et~al.}, ``Wireless resource management in intelligent semantic communication networks,'' in \emph{IEEE INFOCOM 2022-IEEE Conference on Computer Communications Workshops (INFOCOM WKSHPS)}.\hskip 1em plus 0.5em minus 0.4em\relax IEEE, 2022, pp. 1--6.

\bibitem{r32}
Z.~Yang, M.~Chen, Z.~Zhang \emph{et~al.}, ``Performance optimization of energy efficient semantic communications over wireless networks,'' in \emph{2022 IEEE 96th Vehicular Technology Conference (VTC2022-Fall)}.\hskip 1em plus 0.5em minus 0.4em\relax IEEE, 2022, pp. 1--5.

\bibitem{r33}
H.~Xie, Z.~Qin, X.~Tao \emph{et~al.}, ``Task-oriented multi-user semantic communications,'' \emph{IEEE Journal on Selected Areas in Communications}, vol.~40, no.~9, pp. 2584--2597, 2022.

\bibitem{r34}
P.~Qin, Y.~Fu, J.~Zhang \emph{et~al.}, ``Drl-based resource allocation and trajectory planning for noma-enabled multi-uav collaborative caching 6g network,'' \emph{IEEE Transactions on Vehicular Technology}, vol.~73, no.~6, pp. 8750--8764, 2024.

\bibitem{r35}
Y.~Chen, Y.~Yang, Y.~Wu \emph{et~al.}, ``Joint trajectory optimization and resource allocation in uav-mec systems: A lyapunov-assisted drl approach,'' \emph{IEEE Transactions on Services Computing}, 2025.

\bibitem{r36}
B.~Yin, X.~Fang, and X.~Wang, ``Joint optimization of trajectory control, resource allocation, and user association based on drl for multi-fixed-wing uav networks,'' \emph{IEEE Transactions on Wireless Communications}, vol.~23, no.~10, pp. 13\,330--13\,343, 2024.

\bibitem{r37}
X.~Wu, L.~Liang, W.~Wen \emph{et~al.}, ``Drl-based trajectory optimization and computation-aware resource allocation for uav-assisted edge computing networks,'' \emph{IEEE Internet of Things Journal}, 2025.

\bibitem{r38}
Z.~Chang, H.~Deng, L.~You \emph{et~al.}, ``Trajectory design and resource allocation for multi-uav networks: Deep reinforcement learning approaches,'' \emph{IEEE Transactions on Network Science and Engineering}, vol.~10, no.~5, pp. 2940--2951, 2022.

\bibitem{r39}
Y.~Qin, Z.~Zhang, X.~Li \emph{et~al.}, ``Deep reinforcement learning based resource allocation and trajectory planning in integrated sensing and communications uav network,'' \emph{IEEE Transactions on Wireless Communications}, vol.~22, no.~11, pp. 8158--8169, 2023.

\bibitem{r40}
C.~Zhang, Z.~Li, C.~He \emph{et~al.}, ``Deep reinforcement learning based trajectory design and resource allocation for uav-assisted communications,'' \emph{IEEE Communications Letters}, vol.~27, no.~9, pp. 2398--2402, 2023.

\bibitem{r41}
H.~Wang, H.~Zhang, X.~Liu \emph{et~al.}, ``Joint uav placement optimization, resource allocation, and computation offloading for thz band: A drl approach,'' \emph{IEEE Transactions on Wireless Communications}, vol.~22, no.~7, pp. 4890--4900, 2022.

\bibitem{r42}
J.~Lin, L.~Huang, H.~Zhang, X.~Yang, and P.~Zhao, ``A novel lyapunov based dynamic resource allocation for uavs-assisted edge computing,'' \emph{Computer Networks}, vol. 205, p. 108710, 2022.

\bibitem{r43}
\BIBentryALTinterwordspacing
M.~Piorkowski, N.~Sarafijanovic-Djukic, and M.~Grossglauser, ``Crawdad data set epfl/mobility (v. 2009-02-24),'' Feb. 2009. [Online]. Available: \url{https://crawdad.org/epfl/mobility/20090224}
\BIBentrySTDinterwordspacing

\end{thebibliography}

\end{document}